\def\qed{\hfill$\Box$\par\vskip1em}
\begin{document}
\title{Stand-Up Indulgent Gathering on Lines}
\titlerunning{Stand-Up Indulgent Gathering on Lines}

\iftrue

\author{Quentin Bramas\inst{1} \and Sayaka Kamei\inst{2} \and Anissa Lamani\inst{1} \and S\'ebastien Tixeuil \inst{3}}

\institute{
University of Strasbourg, ICube, CNRS, France. 
\and
Graduate School of Advanced Science and Engineering, Hiroshima University, Japan.
\and
Sorbonne University, CNRS, LIP6, IUF, France. }

\else
\institute{}
\fi

\maketitle

\begin{abstract}
We consider a variant of the crash-fault gathering problem called stand-up indulgent gathering (SUIG). In this problem, a group of mobile robots must eventually gather at a single location, which is not known in advance. If no robots crash, they must all meet at the same location. However, if one or more robots crash at a single location, all non-crashed robots must eventually gather at that location. The SUIG problem was first introduced for robots operating in a two-dimensional continuous Euclidean space, with most solutions relying on the ability of robots to move a prescribed (real) distance at each time instant.

In this paper, we investigate the SUIG problem for robots operating in a discrete universe (i.e., a graph) where they can only move one unit of distance (i.e., to an adjacent node) at each time instant. Specifically, we focus on line-shaped networks and characterize the solvability of the SUIG problem for oblivious robots without multiplicity detection.

\keywords{Crash failure, fault-tolerance, LCM robot model}
\end{abstract}

\section{Introduction}

\subsection{Context and Motivation}
Mobile robotic swarms recently received a considerable amount of attention from the Distributed Computing scientific community. Characterizing the exact hypotheses that enable solving basic problems for robots represented as disoriented (each robot has its own coordinate system) oblivious (robots cannot remember past action) dimensionless points evolving in a Euclidean space has been at the core of the researchers' goals for more than two decades.
One of the key such hypotheses is the scheduling assumption~\cite{PGN2019}: robots can execute their protocol fully synchronized (FSYNC), in a completely asynchronous manner (ASYNC), of having repeatedly a fairly chosen subset of robots scheduled for synchronous execution (SSYNC).

Among the many studied problems, the \emph{gathering}~\cite{SY1999} plays a benchmarking role, as its simplicity to express (robots have to gather in finite time at the exact same location, not known beforehand) somewhat contradicts its computational tractability (two robots evolving assuming SSYNC scheduling cannot gather, without additional hypotheses).

As the number of robots grows, the probability that at least one of them fails increases, yet, relatively few works consider the possibility of robot failures.
One of the simplest such failures is the \emph{crash} fault, where a robot unpredictably stops executing its protocol.
In the case of gathering, one should prescribe the expected behavior in the presence of crash failures. 
Two variants have been studied: \emph{weak} gathering expects all correct (that is, non-crashed) robots to gather, regardless of the positions of the crashed robots, while \emph{strong} gathering (also known as stand-up indulgent gathering -- SUIG) expects correct robots to gather at the (supposedly unique) crash location. 
In continuous Euclidean space, weak gathering is solvable in the SSYNC model~\cite{ND2006,ZSS2013,QS2015,XMP2020}, while SUIG (and its variant with two robots, stand up indulgent rendezvous -- SUIR) is only solvable in the FSYNC model~\cite{QAS2020,QAS2021}. 

A recent trend~\cite{PGN2019} has been to move from the continuous environment setting to a discrete one. More precisely, in the discrete setting, robots can occupy a finite number of locations, and move from one location to another if they are neighboring. This neighborhood relation is conveniently represented by a graph whose nodes are locations, leading to the ``robots on graphs'' denomination. This discrete setting is better suited for describing constrained physical environments, or environments where robot positioning is only available from discrete sensors~\cite{TPRLSX19}. From a computational perspective, the continuous setting and the discrete setting are unrelated: on the one hand, the number of possible configurations (that, the number of robot positions) is much more constrained in the discrete setting than in the continuous setting (only a finite number of configurations exists in the discrete setting), on the other hand, the continuous setting offers algorithms designers more flexibility to solve problematic configurations (e.g., using arbitrarily small movements to break a symmetry). 

In this paper, we consider the discrete setting, and aim to characterize the solvability of the SUIR and SUIG problems: in a set of locations whose neighborhood relation is represented by a line-shaped graph, robots have to gather at one single location, not known beforehand; furthermore, if one or more robots crash anytime at the same location, all robots must gather at this location.  
\subsection{Related Works}

In graphs, in the absence of faults, mobile robot gathering was primarily considered for ring-shaped graphs~\cite{REA2008,RAA2010,SAFS2011,SAFS2012,TTSF2013,GGA2014,GAN2017}.
For other topologies, gathering problem was considered, e.g., in 
finite grids~\cite{GGRA2016}, trees~\cite{GGRA2016}, tori~\cite{SAFSK2021}, complete cliques~\cite{SGA2021}, and complete bipartite graphs~\cite{SGA2021}.
Most related to our problem is the (relaxed) FSYNC gathering algorithm presented by Castenow et al.~\cite{CFH2020} for grid-shaped networks where a single robot may be stationary. The main differences with our settings are as follows. First, if no robot is stationary, their~\cite{CFH2020} robots end up in a square of $2\times 2$ rather than a single node as we require. Second, when one robot is stationary (and thus never moves), all other robots gather at the stationary robot location, \emph{assuming a stationary robot can be detected as such when on the same node} (instead, we consider that a crashed robot cannot be detected), and assuming a stationary robot never moves from the beginning of the execution (while we consider anytime crashes). Third, they assume that initial positions are neighboring (while we characterize which patterns of initial positions are solvable).

In the continuous setting, the possibility of a robot failure was previously considered. 
As previously stated, the weak-gathering problem in SSYNC~\cite{ND2006,ZSS2013,QS2015,XMP2020}, and the SUIR and SUIG problems in FSYNC~\cite{QAS2020,QAS2021} were previously considered.
In particular, solutions to SUIR and SUIG~\cite{QAS2020,QAS2021} make use of a level-slicing technique, that mandates them to move by a fraction of the distance to another robot. Obviously, such a technique cannot be translated to the discrete model, where robots always move by exactly one edge.

Works combining the discrete setting and the possibility of robot failures are scarce. 
Ooshita and Tixeuil~\cite{FS2015} considered transient robot faults placing them at arbitrary locations, and presented a probabilistic self-stabilizing gathering algorithm in rings, assuming SSYNC, and that robots are able to exactly count how many of them occupy a particular location.
Castaneda et al.~\cite{ASMD2017} presented a weaker version of gathering, named edge-gathering. They provided a solution to edge-gathering in acyclic graphs, assuming that any number of robots may crash. On the one hand, their scheduling model is the most general (ASYNC); on the other hand, their robot model makes use of persistent memory (robots can remember some of their past actions, and communicate explicitly with other robots).

Overall, to our knowledge, the SUIR and SUIG problems were never addressed in the discrete setting. 
\subsection{Our Contribution}
In this paper, we initiate the research on SUIG and SUIR feasibility for robots on line-shaped graphs, considering the vanilla model (called OBLOT \cite{PGN2019}) where robots are oblivious (that is, they don't have access to persistent memory between activations), are \emph{not} able to distinguish multiple occupations of a given location, and can be completely disoriented (no common direction). 
More precisely, we focus on both of finite/infinite lines, and study conditions that preclude or enable SUIG and SUIR solvability. 
As in the continuous model, we first prove that SUIG and SUIR are impossible to solve in the SSYNC model, so we concentrate on the FSYNC model.
It turns out that, in FSYNC, SUIR is solvable if and only if the initial distance between the two robots is even, and that SUIG is solvable if only if the initial configuration is not edge-symmetric.
Our positive results are constructive, as we provide an algorithm for each case and prove it correct.
As expected, the key enabling algorithmic constructions we use for our protocols are fundamentally different from those used in continuous spaces~\cite{QAS2020,QAS2021}, as robots can no longer use fractional moves to solve the problem, and can be of independent interest to build further solutions in other topologies.

The rest of the paper is organized as follows. 
Section~\ref{sec:model} presents our model assumptions, Section~\ref{sec:suir} is dedicated to SUIR, and Section~\ref{sec:suig} is dedicated to SUIG. We provide concluding remarks in Section~\ref{sec:conclusion}. Due to space constraints, additional proofs and results are presented in the appendix.  
\section{Model}
\label{sec:model}
The line consists of an infinite or finite number of nodes $u_0, u_1, u_2, \dots$, such that a node $u_i$ is connected to both $u_{(i-1)}$ and $u_{(i+1)}$ (if they exist). Note that in the case where the line is finite of size $n$, two nodes of the line, $u_0$ and $u_{n-1}$ are only connected to $u_1$ and $u_{n-2}$, respectively.  

Let $R=\{r_1, r_2, \dots, r_k\}$ be the set of $k\geq 2$ autonomous robots.
Robots are assumed to be anonymous (i.e., they are indistinguishable), uniform (i.e., they all execute the same program, and use no localized parameter such as a particular orientation), oblivious (i.e., they cannot remember their past actions), and disoriented (i.e., they cannot distinguish left and right).
We assume that robots do not know the number of robots $k$.
In addition, they are unable to communicate directly, however, they have the ability to sense the environment including the positions of all other robots, i.e., they have infinite view. 
Based on the snapshot resulting of the sensing, they decide whether to move or to stay idle. 
Each robot $r$ executes cycles infinitely many times, \emph{(i)} first, $r$ takes a snapshot of the environment to see the positions of the other robots (LOOK phase), \emph{(ii)} according to the snapshot, $r$ decides whether it should move and where (COMPUTE phase), and \emph{(iii)} if $r$ decides to move, it moves to one of its neighbor nodes depending on the choice made in COMPUTE phase (MOVE phase). 
We call such cycles LCM (LOOK-COMPUTE-MOVE) cycles.
We consider the \emph{FSYNC} model in which at each time instant $t$, called round, each robot $r$ executes an LCM cycle synchronously with all the other robots, and the \emph{SSYNC} model where a non-empty subset of robots chosen by an adversarial scheduler executes an LCM cycle synchronously, at each $t$.

A node is considered \emph{occupied} if it contains at least one robot; otherwise, it is \emph{empty}. If a node $u$ contains more than one robot, it is said to have a \textit{tower} or \textit{multiplicity}. The ability to detect towers is called \textit{multiplicity detection}, which can be either \textit{global} (any robot can sense a tower on any node) or \textit{local} (a robot can only sense a tower if it is part of it). If robots can determine the number of robots in a sensed tower, they are said to have \textit{strong} multiplicity detection. In this work, we assume that robots do not have multiplicity detection and cannot distinguish between nodes with one robot and those with multiple robots.

As robots move and occupy nodes, their positions form the system's \emph{configuration} $C_t=(d(u_0), d(u_1), \dots)$ at time $t$. Here, $d(u_i) = 0$ if node $u_i$ is empty and $d(u_i) = 1$ if it is occupied.
  
Given two nodes $u_i$ and $u_j$, a segment $[u_i,u_j]$ represents the set of nodes between $u_i$ and $u_j$, inclusive. No assumptions are made about the state of the nodes in $[u_i,u_j]$. Any node $u \in [u_i, u_j]$ can be either empty or occupied. The number of occupied nodes in $[u_i,u_j]$ is represented by $|[u_i,u_j]|$. In Fig. \ref{fig:model}, each node $u_i$, where $i \in \{2, 3, \dots, 9\}$, is part of the segment $[u_2,u_9]$. Note that $|[u_2,u_9]|=3$ since nodes $u_2$, $u_5$, and $u_9$ are occupied.

For a given configuration $C$, let $u_i$ be an occupied node. Node $u_j$ is considered an \emph{occupied neighboring node} of $u_i$ in $C$ if it is occupied and if $|[u_i,u_j]|=2$. A \emph{border node} in $C$ is an occupied node with only one occupied neighboring node. A robot on a border node is referred to as a \emph{border robot}. In Fig. \ref{fig:model}, when $k=4$, nodes $u_0$ and $u_9$ are border nodes.

\begin{figure}[t]
    \begin{center}
        \begin{tikzpicture}
            \tikzstyle{robot}=[circle,fill=black!100,inner sep=0.12cm]
            \draw[-] (-1,0) -- (9,0);
            \def \n {9}
            \foreach \s in {0,...,\n}
            {
             \node[draw, circle, fill=white, inner sep=0.15cm] at (\s-0.5,0) {};
             \node[opacity=0, text opacity=1] at (\s-0.5,0.45) {$u_{\s}$};
            }
            \node[opacity=0, text opacity=1] at (5.2,1) {$[u_2,u_9]$};
\draw[<->] (1.3,0.7) -- (8.7,0.7);
            \node[robot] at (-0.5,0) {};
            \node[robot] at (1.5,0) {};
            \node[robot] at (4.5,0) {};
            \node[robot] at (8.5,0) {};
        \end{tikzpicture}
        \caption{Instance of a configuration in which the segment $[u_2,u_9]$ is highlighted.}\label{fig:model}
    \end{center}
\end{figure}
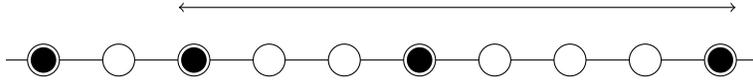

The \emph{distance} between two nodes $u_i$ and $u_j$ is the number of edges between them. 
The distance between two robots $r_i$ and $r_j$ is the distance between the two nodes occupied by $r_i$ and $r_j$, respectively.
We denote the distance between $u_i$ and $u_j$ (resp. $r_i$ and $r_j$) $dist(u_i,u_j)$ (resp. $dist(r_i,r_j)$).
Two robots or two nodes are \emph{neighbors} (or adjacent) if the distance between them is one.
A sequence of consecutive occupied nodes is a \emph{block}. 
Similarly, a sequence of consecutive empty nodes is a \emph{hole}.

An \emph{algorithm} $A$ is a function mapping the snapshot (obtained during the LOOK phase) to a neighbor node destination to move to (during the MOVE phase).  
An \emph{execution} ${\cal E}=(C_0, C_1,\dots)$ of $A$ is a sequence of configurations, where $C_0$ is an initial configuration, and every configuration $C_{t+1}$ is obtained from $C_{t}$ by applying $A$. 

Let $r$ be a robot located on node $u_i$ at time $t$ and let $S^+(t) = d(u_i), d(u_{i+1}),$ $\dots,$ $d(u_{i+m})$ and $S^-(t) = d(u_i), d(u_{i-1}), \dots d(u_{i-m'})$ be two sequences such that $m, m' \in \mathbb{N}$. 
Note that $m = n-i-1$ and $m' = i$ in the case where the line is finite of size $n$, $m=m'=\infty$ in the case where the line is infinite.
The view of robot $r$ at time $t$, denoted ${View}_r(t)$, is defined as the pair $\{S^+(t),S^-(t)\}$ ordered in the lexicographic order. 

Let $C$ be a configuration at time $t$. Configuration $C$ is said to be \emph{symmetric} at time $t$ if there exist two robots $r$ and $r'$ such that $View_r(t) = View_{r'}(t)$. 
In this case, $r$ and $r'$ are said to be symmetric robots. Let $C$ be a symmetric configuration at time $t$ then, $C$ is said to be \emph{node-symmetric} if the distance between two symmetric robots is even (i.e., if the axis of symmetry intersects with the line on a node), otherwise, $C$ is said to be \emph{edge-symmetric}. 
Finally, a non-symmetric configuration is called a \emph{rigid} configuration. 

\medskip
\noindent\textbf{Problem definition.}
A robot is said to be \emph{crashed} at time $t$ if it is not activated at any time $t^\prime \geq t$.
That is, a crashed robot stops execution and remains at the same position indefinitely.
We assume that robots cannot identify a crashed robot in their snapshots (i.e., they are able to see the crashed robots but remain unaware of their crashed status).
A crash, if any, can occur at any round of the execution. Furthermore, if more than one crash occurs, all crashes occur at the same location. In our model, since robots do not have multiplicity detection capability, a location with a single crashed robot and with multiple crashed robots are indistinguishable, and are thus equivalent. In the sequel, for simplicity, we consider at most one crashed robot.

We consider the \emph{Stand Up Indulgent Gathering} (SUIG) problem defined in \cite{QAS2021}. 
An algorithm solves the SUIG problem if, for any initial configuration $C_0$ (that may contain multiplicities), and for any execution ${\cal E}=(C_0, C_1,\dots)$, there exists a round $t$ such that all robots (including the crashed robot, if any) gather at a single node, not known beforehand, for all $t^\prime\geq t$.
The special case with $k=2$ is called the \emph{Stand Up Indulgent Rendezvous} (SUIR) problem.

\section{Stand Up Indulgent Rendezvous}\label{sec:suir}

We address in this section the case in which $k=2$, that is, the SUIR problem. 
We show that SSYNC solutions do not exist when one seeks a deterministic solution (Corollary~\ref{cor:impossible-SSYNC}), and that even in FSYNC, not all initial configurations admit a solution (Theorem~\ref{theo:SUIR-line}). 
By contrast, all other initial configurations admit a deterministic SUIR solution (Theorem~\ref{theo:SUIR-line-possible}).

\begin{theorem}\label{theo:SUIR-line}
Starting from a configuration where the two robots are at odd distance from each other on a line-shaped graph, the SUIR problem is unsolvable in FSYNC by deterministic oblivious robots without additional hypotheses.  
\end{theorem}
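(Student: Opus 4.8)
The plan is to argue by contradiction starting from the \emph{crash-free} execution, combining a parity argument on the inter-robot distance with the fact that a crashed robot is indistinguishable, from the point of view of the other robot's snapshot, from a robot that simply decides to stay. Suppose a deterministic oblivious algorithm $A$ solves SUIR in FSYNC from an initial configuration $C_0$ in which $dist(r_1,r_2)$ is odd. Since the execution in which nobody crashes is a legal execution, $A$ must gather the two robots in it, so there is a round $T$ with $dist(r_1,r_2)=0$ in $C_T$. The first observation is that on a line each robot changes its coordinate by $-1$, $0$, or $+1$ in a single round, the value $0$ occurring exactly when it does not move; hence the parity of $dist(r_1,r_2)$ changes between two consecutive configurations if and only if \emph{exactly one} of the two robots moves during that round. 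As this parity is odd in $C_0$ and even (namely $0$) in $C_T$, there is a first transition $C_{t-1}\to C_t$, with $1\le t\le T$, at which the parity flips; in that round exactly one robot moves, call it $r_m$, the other robot $r_s$ stays, and $dist(r_1,r_2)\ge 1$ in $C_{t-1}$.

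The main step is then to replay the same execution up to $C_{t-1}$ and let $r_m$ crash there, i.e. $r_m$ is never activated from round $t$ on. The two executions coincide through $C_{t-1}$, so this is a legal SUIR execution with a single crash. In the transition out of $C_{t-1}$, robot $r_m$ does not move (it is crashed), while robot $r_s$ is activated on exactly the same snapshot it saw in the crash-free execution --- the view depends only on which nodes are occupied, not on the crashed status of their occupants --- so by determinism $A$ again instructs $r_s$ to stay. The configuration is therefore unchanged, and an immediate induction shows it stays equal to $C_{t-1}$ forever: $r_m$ is frozen and $r_s$ repeatedly sees $C_{t-1}$ and stays. Since $dist(r_1,r_2)\ge 1$ in $C_{t-1}$, the two robots never gather, contradicting the correctness of $A$; this proves the theorem.

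A convenient feature of this approach is that it is uniform: it requires no case split between (edge-)symmetric and rigid configurations, nor between finite and infinite lines, because it only uses that gathering forces at least one parity-changing round somewhere along the crash-free execution (one could, alternatively, treat the edge-symmetric case by an adversary that keeps the configuration edge-symmetric forever, but the crash-based argument subsumes it). I expect the only delicate point to be the crash step itself: one must check that a crash is allowed to occur precisely at round $t$, that the prefix up to $C_{t-1}$ is genuinely common to both executions, and that the surviving robot's snapshot is \emph{literally} the same in both, so that obliviousness and determinism pin down its ``stay'' decision. The parity computation is routine once one records that a robot moves by a single edge per round.
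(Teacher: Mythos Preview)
Your proof is correct and uses essentially the same two ingredients as the paper: a parity argument on the inter-robot distance, and the observation that if exactly one robot would move, crashing it freezes the configuration forever. The paper organizes these ingredients slightly differently---it first argues that in any correct algorithm \emph{both} robots must move in every non-gathered configuration (ruling out ``none moves'' and ``exactly one moves'' up front via the crash), and then concludes that the crash-free execution preserves odd parity and hence never reaches distance~$0$---whereas you run the contradiction from the crash-free execution and locate the first parity-flip round before invoking the crash; but the underlying argument is the same.
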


\begin{proof}
Let us first observe that in any configuration, both robots must move.
Indeed, if no robot moves, then no robot will ever move, and SUIR is never achieved. 
If one robot only moves, then the adversary can crash this robot, and the two robots never move, hence SUIR is never achieved.

In any configuration, two robots can either:
\emph{(i)} move both in the same direction,
\emph{(ii)} move both toward each other, or
\emph{(iii)} move both in the opposite direction.
Assuming an FSYNC scheduling and no crash by any robot, in the case of \emph{(i)}, the distance between the two robots does not change, in the case of \emph{(ii)}, the distance decreases by two, and in the case of \emph{(iii)}, it increases by two. 
Since the distance between the two robots is initially odd, then any FSYNC execution of a protocol step keeps the distance between robots odd. 
As a result, the distance never equals zero, and the robots never gather. 
\qed
\end{proof}

\begin{corollary}\label{cor:impossible-SSYNC}
The SUIR problem is unsolvable on a line in SSYNC without additional hypotheses.
\end{corollary}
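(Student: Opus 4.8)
The plan is to derive the corollary almost immediately from Theorem~\ref{theo:SUIR-line}, using the fact that the fully synchronous schedule is a particular semi-synchronous one. By definition a SUIR algorithm must be correct from \emph{every} two-robot initial configuration, in particular from one where the robots are at odd distance. If an SSYNC algorithm for SUIR existed, then --- the SSYNC adversary being free to activate the whole robot set at every round --- the induced FSYNC execution from that odd-distance configuration would be a legal run that must reach gathering, contradicting Theorem~\ref{theo:SUIR-line}. So the first (and essentially only) step is to record this ``every FSYNC run is an SSYNC run'' inclusion and invoke the theorem.

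I would then add a remark strengthening the claim, since ``unsolvable in SSYNC'' is most naturally read as failing even when correctness is only required from even-distance configurations. For this I reuse the opening of the proof of Theorem~\ref{theo:SUIR-line}: a correct algorithm must move at least one robot in every non-gathered configuration, otherwise the FSYNC schedule freezes that configuration forever. Hence, from an even-distance configuration, the SSYNC adversary activates a single moving robot; a one-edge move flips the parity of the inter-robot distance, which becomes odd and still positive. From that round on the adversary switches back to the FSYNC schedule, and the parity argument of Theorem~\ref{theo:SUIR-line} applies unchanged --- both robots move each round, and each move alters the distance by $0$ or $\pm 2$ --- so the distance stays odd, hence nonzero, forever, and gathering never occurs.

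I do not expect a genuine obstacle: the corollary is essentially a packaging of the theorem with a scheduling inclusion. The only points I would double-check are that the auxiliary fact ``both robots must move in every non-gathered configuration'' is indeed available from the proof of Theorem~\ref{theo:SUIR-line} in full generality (not only for odd-distance configurations, which is what the strengthening needs), and that the parity bookkeeping behaves correctly when the two robots are adjacent, where ``moving toward each other'' has them swap nodes and leaves the distance equal to $1$ rather than decreasing it by two. Both are harmless, but they are precisely the spots where a hasty argument could slip.
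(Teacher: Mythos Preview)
Your proposal is correct. Interestingly, the paper does \emph{not} take your direct route (start from an odd-distance initial configuration and run the FSYNC schedule), but goes straight to what you call the strengthening: it assumes a hypothetical SSYNC algorithm, runs a schedule that activates exactly one robot per round, argues that because the algorithm must eventually gather and the distance changes by at most one per step it must pass through an odd value, and then switches to FSYNC and invokes Theorem~\ref{theo:SUIR-line}. So the paper's proof and your ``remark'' are the same argument; your primary argument is a shorter path that exploits the paper's definition of SUIR as requiring correctness from \emph{every} initial configuration. Your version of the strengthening is also slightly more explicit than the paper's, since you name which robot the adversary activates (one that is ordered to move), whereas the paper leaves this implicit in the phrase ``since $A$ solves SUIR, every such execution reaches a configuration where the two robots are at odd distance.'' The two caveats you flag (that the ``both robots must move'' observation is not restricted to odd-distance configurations, and the adjacent-swap corner case) are indeed harmless.
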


\begin{proof}
For the purpose of contradiction, suppose that there exists a SUIR algorithm $A$ in SSYNC.
Consider an SSYNC schedule starting from a configuration where exactly one robot is activated in each round. 
Since any robot advances by exactly one edge per round, and that $A$ solves SUIR, every such execution reaches a configuration where the two robots are at distance $2i+1$, where $i$ is an integer, that is, a configuration where robots are at odd distance from one another.
From this configuration onward, the schedule becomes synchronous (as a synchronous schedule is still allowed in SSYNC). 
By Theorem~\ref{theo:SUIR-line}, rendezvous is not achieved, a contradiction.\qed
\end{proof}

By Theorem \ref{theo:SUIR-line}, we investigate the case of initial configurations where the distance between the two robots is even. 
It turns out that, in this case, the SUIR problem can be solved. 

\begin{theorem}\label{theo:SUIR-line-possible}
Starting from a configuration where the two robots are at even distance from each other, the SUIR problem is solvable in FSYNC by deterministic oblivious robots without additional hypotheses.  
\end{theorem}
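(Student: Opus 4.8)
The plan is to exhibit a single, extremely simple algorithm and prove it correct by tracking the parity of the inter-robot distance. The algorithm is: \emph{in the LOOK phase a robot takes its snapshot; if it sees another occupied node, in the MOVE phase it moves one edge toward it; if it sees no other occupied node (hence it is already co-located with the other robot), it stays put}. Since $k=2$, whenever the two robots are not yet gathered each robot's snapshot contains exactly one other occupied node, so ``the direction toward the other robot'' is unambiguous and, crucially, \emph{intrinsic}: it does not require distinguishing left from right, so disorientation is not an obstacle. The same remark shows the algorithm is deterministic and oblivious, and uses no multiplicity detection — the stopping condition ``I see nobody else'' is exactly what a robot sitting on the gathering node observes, and it also handles an initial configuration that is already a tower.

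Next I would establish correctness in the crash-free case. In FSYNC both robots are activated every round and, by construction, both move one edge toward each other; this is exactly case \emph{(ii)} in the proof of Theorem~\ref{theo:SUIR-line}, so the distance decreases by exactly $2$ each round. Starting from an even distance $2d$ (with $d=0$ being the trivial already-gathered case), after $d$ rounds the distance is $0$: the two robots occupy the same node, thereafter each sees no other robot and stays, so gathering is achieved and preserved. There is no ``crossing'' to worry about: when the distance is $2$ the two robots move onto the unique node between them, and since the distance stays even and positive at every round before gathering, it reaches $0$ exactly.

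Then I would handle the case of one crash; recall that with two robots it suffices to consider at most one crashed robot, say $r_1$, crashing at some round $t_0$. For the rounds before $t_0$ the crash-free analysis applies, so at the beginning of round $t_0$ the distance is some even value $2d'$; if $2d'=0$ the robots are already gathered at $r_1$'s location — which is the crash location — and we are done. Otherwise, from round $t_0$ on, $r_1$ is frozen while $r_2$ keeps moving one edge toward $r_1$ every round (its snapshot, hence its decision, is unaffected by the fact that $r_1$ is crashed), so the distance now decreases by exactly $1$ per round and reaches $0$ after $2d'$ rounds; at that point $r_2$ sits on $r_1$'s node, sees no other robot, and stays, so all robots are gathered at the crash location forever. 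On a finite line no move ever leaves the line, since a robot only ever moves toward another robot, which is itself on the line.

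I do not expect a genuine obstacle here: the argument is essentially the parity bookkeeping already used in Theorem~\ref{theo:SUIR-line}, run in reverse. The only points needing a little care are (i) checking that the algorithm behaves identically whether or not the other robot is crashed — immediate, since the decision depends only on the snapshot; (ii) making the stopping/stationarity condition safe without multiplicity detection, handled by the ``stay if you see nobody else'' rule; and (iii) the routine check that the (even, positive) distance cannot jump over $0$.
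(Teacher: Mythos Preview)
Your proposal is correct and essentially identical to the paper's own proof: both use the algorithm ``move toward the other robot; stay if you see no other occupied node'' and argue that the distance decreases by~$2$ per round in the crash-free case and by~$1$ per round once a crash occurs. Your write-up adds a few welcome explicit checks (well-definedness under disorientation, the stopping rule without multiplicity detection, no overshooting), but the core idea and structure match the paper exactly.
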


\begin{proof}
Our proof is constructive. 
Consider the simple algorithm ``go to the other robot position.''

When no robot crashes, at each FSYNC round, the distance between the two robots decreases by two. 
Since it is initially even, it eventually reaches zero, and the robots stop moving (the other robot is on the same location), hence rendezvous is achieved. 

If one robot crashes, in the following FSYNC rounds, the distance between the two robots decreases by one, until it reaches zero. 
Then, the correct robot stops moving (the crashed robot is on the same location), hence rendezvous is achieved. 
\qed
\end{proof}

Observe that both Theorems~\ref{theo:SUIR-line} and \ref{theo:SUIR-line-possible} are valid regardless of the finiteness of the line network. 

\section{Stand Up Indulgent Gathering}\label{sec:suig}

We address the SUIG problem ($k\geq 2$) on line-shaped networks in the following. 
Section~\ref{sec:suig-impossible} first derives some impossibility results, and then Section~\ref{sec:suig-algorithm} presents our algorithm; finally, the proof of our algorithm appears in Section~\ref{sec:suig-proof}.

\subsection{Impossibility Results}\label{sec:suig-impossible}

\begin{theorem}[\cite{REA2008}]\label{thm:FSYNC-edge-impossible}
The gathering problem is unsolvable in FSYNC on line networks starting from an edge-symmetric configuration even with strong multiplicity detection.
\end{theorem}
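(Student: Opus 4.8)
The plan is to show that any deterministic algorithm $A$ admits an FSYNC execution that forever preserves the edge-symmetry of the starting configuration, so that the robots can never all occupy the same node. (No crash is needed for the argument, so a single fixed adversary execution suffices.)

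First I would fix an edge-symmetric initial configuration $C_0$ and let $\sigma$ be its axis of symmetry. By definition of edge-symmetry, the distance between the two symmetric robots is odd, so $\sigma$ crosses the line at the midpoint of some edge $\{u_i,u_{i+1}\}$; consequently the reflection $\rho$ through $\sigma$ is a fixed-point-free involution on the set of nodes. In particular every occupied node $u$ has a distinct mirror $\rho(u)\neq u$ that is also occupied and, under strong multiplicity detection, carries the same number of robots. Hence robots come in mirror pairs $\{r,\rho(r)\}$ with $\mathit{View}_{r}(0)=\mathit{View}_{\rho(r)}(0)$.

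The core step is an induction establishing that $A$ has an FSYNC execution in which $C_t$ is edge-symmetric about $\sigma$ for all $t$. Assume $C_t$ is edge-symmetric. In FSYNC every robot is activated, and each robot $r$ has the same view as its mirror $\rho(r)$, so $A$ prescribes the same decision for both. The only remaining freedom is the chirality a robot with a \emph{symmetric} view uses to interpret "left'' and "right'' when turning its decision into an actual neighbor node; since a disoriented robot genuinely cannot tell its two sides apart, I let the scheduler resolve this ambiguity, picking opposite chiralities for $r$ and $\rho(r)$ so that they move to mirror nodes (robots with asymmetric views have a canonical orientation induced by the view and need no such help). Therefore $C_{t+1}$ is again edge-symmetric about $\sigma$, and, again, towers on mirror nodes keep equal size, so multiplicity detection does not break symmetry.

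Finally I would conclude by contradiction: a gathering configuration places all robots on a single node $v$, which forces $v=\rho(v)$, impossible because $\sigma$ passes through an edge midpoint and $\rho$ has no fixed node. Hence no configuration of the constructed execution is a gathering configuration, and $A$ does not solve gathering. I expect the only delicate point to be the third step — making rigorous that, whatever $A$ outputs on a symmetric view, the scheduler can choose the local coordinate systems so that mirror robots make mirror moves; the fixed-point-freeness of $\rho$, the preservation of multiplicities, and the final contradiction with a one-node configuration are all routine.
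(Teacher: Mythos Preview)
The paper does not actually prove this statement: it is quoted verbatim as a known result from~\cite{REA2008} and is used only to derive Corollary~\ref{cor:edge-symmetric}. So there is no ``paper's own proof'' to compare against.

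That said, your argument is the standard symmetry-preservation argument and is correct. One small sharpening you could make explicit: for a pair of mirror robots whose individual view is \emph{asymmetric} (i.e., $S^+\neq S^-$), no adversarial choice of chirality is even required---the algorithm's output is a function of the ordered pair $(S^+,S^-)$, and since the two robots see this pair with opposite global orientations, they automatically move to mirror nodes. The adversary only needs to pick chiralities for robots whose own view is a palindrome ($S^+=S^-$), exactly as you anticipate in your final paragraph. With that clarified, the induction goes through, multiplicities on mirror nodes stay equal (so strong multiplicity detection is harmless), and the fixed-point-freeness of the edge reflection blocks any single-node gathering. This is almost certainly the same mechanism as in the cited reference, so your proposal is both correct and canonical.
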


\begin{corollary}
\label{cor:edge-symmetric}
The SUIG problem is unsolvable in FSYNC on line networks starting from an edge-symmetric configuration even with strong multiplicity detection.
\end{corollary}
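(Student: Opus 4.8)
The plan is to reduce the claim directly to the crash-free gathering impossibility of Theorem~\ref{thm:FSYNC-edge-impossible}. The key observation is that the SUIG specification, when instantiated on the particular execution in which \emph{no} robot ever crashes, degenerates exactly to the classical gathering requirement: there must be a round after which all $k$ robots sit on a common node. A crash ``if any'' is permitted by the model but need not occur, so the crash-free branch of any SUIG execution is itself a legitimate execution that the SUIG specification constrains. Consequently, any algorithm $A$ that solves SUIG in FSYNC is, a fortiori, an FSYNC gathering algorithm when restricted to the sub-family of crash-free executions.

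The argument then proceeds by contradiction. Suppose $A$ solves SUIG in FSYNC on line networks, even when $A$ is granted strong multiplicity detection, starting from some edge-symmetric configuration $C_0$. Consider the execution ${\cal E}=(C_0,C_1,\dots)$ of $A$ under the FSYNC scheduler in which the adversary never crashes any robot. By the previous paragraph, ${\cal E}$ must reach a configuration in which all robots are gathered on a single node. But $C_0$ is edge-symmetric, and Theorem~\ref{thm:FSYNC-edge-impossible} states precisely that no FSYNC gathering algorithm --- even one with strong multiplicity detection --- can gather from an edge-symmetric initial configuration (intuitively, symmetric robots keep identical views and hence perform mirror moves, so the edge axis of symmetry is preserved forever and no single gathering node can emerge). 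This contradicts the existence of ${\cal E}$, so no such $A$ exists.

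Two minor points of rigor must be checked, neither presenting a real difficulty. First, granting $A$ strong multiplicity detection only strengthens the model, so an impossibility proved in that stronger setting immediately yields the impossibility in our weaker (no multiplicity detection) setting; this is exactly why both statements carry the phrase ``even with strong multiplicity detection.'' Second, although SUIG admits initial configurations containing multiplicities, it suffices to exhibit a single edge-symmetric configuration from which SUIG fails, and Theorem~\ref{thm:FSYNC-edge-impossible} already provides an edge-symmetric (indeed multiplicity-free) configuration from which gathering is impossible; restricting attention to that configuration closes the proof. The ``obstacle'' is therefore conceptual rather than computational: one only needs to be confident that the no-crash scenario is a valid instance that the SUIG problem must still handle, which is immediate from the problem definition in Section~\ref{sec:model}.
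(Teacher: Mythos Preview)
Your proof is correct and follows exactly the paper's approach: consider a crash-free FSYNC execution and invoke Theorem~\ref{thm:FSYNC-edge-impossible}. The paper's own proof is the one-line version of your argument, so your additional care about why the crash-free run is a legitimate SUIG execution and why strong multiplicity detection only strengthens the impossibility is sound elaboration rather than a different route.
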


\begin{proof}
Consider a FSYNC execution without crashes, and apply Theorem~\ref{thm:FSYNC-edge-impossible}.\qed
\end{proof}

As a result of Corollary~\ref{cor:edge-symmetric}, we suppose in the remaining of the section that initial configurations are \emph{not} edge-symmetric. 

\begin{lemma}
\label{lem:no-SSYNC-line}
Even starting from a configuration that is not edge-symmetric, the SUIG problem is unsolvable in SSYNC without additional hypotheses.
\end{lemma}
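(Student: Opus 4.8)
The plan is to reduce the SUIG instance to the already-settled SUIR instance, exploiting the absence of multiplicity detection. I would take as a witness the initial configuration $C_0$ on the line (infinite, or any sufficiently large finite line with the construction centred) that places a \emph{tower} of $k-1$ robots on one node and a single robot on a node at \emph{even} distance from it; for $k=2$ this degenerates to the two-robots-at-even-distance configuration. This $C_0$ is never edge-symmetric: if it is symmetric at all, the two occupied nodes are the symmetric pair and they are at even distance, so $C_0$ is node-symmetric; otherwise $C_0$ is rigid. Either way the lemma's hypothesis (not edge-symmetric) is met. The reason the tower is useful is that robots cannot tell a node carrying many robots from a node carrying one, so every algorithm $A$ acts on $C_0$ exactly as it would on the two-robot configuration with the same occupied set.

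The key steps are as follows. First I would establish an invariant: under any SSYNC schedule that, in each round, activates the robots of a currently occupied node either all together or not at all, the configuration keeps at most two occupied nodes, and the $k-1$ robots of the original tower remain permanently co-located. Indeed, robots sharing a node have the same view, hence (as $A$ is deterministic and uniform) compute the same destination, so an activated tower moves as a block; an activated lone robot stays lone; a non-activated group is unchanged. Consequently, at every round we have two ``super-robots'' (a tower of $k-1$ and a singleton), and \emph{all} $k$ robots are gathered if and only if these two super-robots occupy the same node. It then remains to show the adversary can keep them apart forever, which is precisely the argument behind Corollary~\ref{cor:impossible-SSYNC}, replayed with ``robot'' replaced by ``super-robot'': exactly as in Theorem~\ref{theo:SUIR-line}, a correct algorithm must move \emph{both} super-robots in every two-node configuration (if it left the mover's partner idle, crashing the mover — which, when the mover is the tower, means crashing all $k-1$ of its robots, legal since they share a node — freezes both super-robots; if it moved neither, they are frozen anyway); given that, activate a single super-robot repeatedly until their distance becomes odd (each such step flips the parity), then switch to a fully synchronous schedule, after which every round changes the distance by $0$ or $\pm2$, so it stays odd and never reaches $0$. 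The two super-robots never coincide, so the $k$ robots never gather, contradicting that $A$ solves SUIG.

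The combinatorics here is light; the hard part will be making the simulation faithful and airtight. Two checks matter: (i) that a tower genuinely cannot split — this is where determinism, uniformity, and, crucially, the lack of multiplicity detection are all used, since the snapshot a tower robot sees is literally the $0/1$ configuration vector, unchanged by how many robots sit on a node (even after some of them crash); and (ii) that the crash sub-argument imported from Theorem~\ref{theo:SUIR-line} lifts to a \emph{legal} SUIG fault pattern, which it does because a whole tower lives on one node and ``all crashes occur at the same location'' is exactly what the model permits. I would also note in passing a shorter but less explicit route: since robots do not know $k$, a SUIG algorithm must in particular solve the $k=2$ case, i.e.\ SUIR from all even-distance (hence non-edge-symmetric) configurations in SSYNC, which Corollary~\ref{cor:impossible-SSYNC} already forbids; the tower construction above merely makes the same conclusion concrete for every $k\ge 3$.
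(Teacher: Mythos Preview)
Your argument is correct and, especially in the ``shorter route'' you note at the end, more direct than the paper's. The paper proceeds by induction on the number $X$ of occupied nodes: the base case $X=2$ is essentially your super-robot argument (activate all robots on one occupied node at a time to reach an odd inter-node distance, then switch to FSYNC and invoke Theorem~\ref{theo:SUIR-line}); the inductive step argues that from any configuration with $X+1$ occupied nodes a purported SUIG algorithm must eventually merge two adjacent occupied nodes, at which moment the adversary activates that whole node in SSYNC to drop to $X$ occupied nodes and applies the induction hypothesis. This buys the paper a stronger intermediate statement---the adversary can defeat the algorithm from \emph{every} initial configuration, not just from your tower-plus-singleton witness---at the cost of the extra induction. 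Your approach instead fixes a single non-edge-symmetric witness (legitimate, since the problem definition requires success from \emph{all} such initial configurations, multiplicities included) and reduces straight to Corollary~\ref{cor:impossible-SSYNC} via the indistinguishability granted by the absence of multiplicity detection; the ``robots do not know $k$'' remark collapses this even further. Both routes are valid; yours is shorter, the paper's is uniform over initial configurations.
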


\begin{proof}
The proof is by induction on the number $X$ of occupied nodes. 
Suppose for the purpose of contradiction that there exists such an algorithm $A$.

If $X=2$, and if the distance between the two occupied nodes is $1$, consider an FSYNC schedule. 
All robots have the same view, so either all robots stay on their locations (and the configuration remains the same), or all robots go to the other location (and the configuration remains with $X=2$ and a distance of $1$ between the two locations). 
As this repeats forever, in both cases, the SUIG is not achieved, a contradiction.
If $X=2$, and if the distance between the two occupied nodes is at least $2$, consider a schedule that either $\emph{(i)}$ executes all robots on the first location, or
\emph{(ii)} executes all robots at the second location, at every round. 
So, the system behaves (as robots do not use additional hypotheses such as multiplicity detection) as two robots, initially on distinct locations separated by distance at least $2$. 
As a result, the distance between the two occupied nodes eventually becomes odd. 
Then, consider an FSYNC schedule, and by Theorem~\ref{theo:SUIR-line}, $A$ cannot be a solution, a contradiction.

Suppose now that for some integer $X$, the lemma holds. 
Let us show that it also holds for $X+1$. 
Consider an execution starting from a configuration with $X+1$ occupied nodes.
Since $A$ is a SUIG solution, any execution of $A$ eventually creates at least one multiplicity point by having a robot $r_1$ on one occupied node moved to an adjacent occupied node. 
Consider the configuration $C_b$ that is immediately before the creation of the multiplicity point. 
Then, in $C_b$, \emph{all} robots at the location of $r_1$ make a move (this is possible in SSYNC), so the resulting configuration has $X$ occupied nodes. 
By the induction hypothesis, algorithm $A$ cannot solve SUIG from this point, a contradiction.

So, for all possible initial configurations, algorithm $A$ cannot solve SUIG, a contradiction.\qed
\end{proof}

As a result of Lemma~\ref{lem:no-SSYNC-line}, we suppose in the sequel that the scheduling is FSYNC.

\subsection{Algorithm $\mathcal{A}_{L}$}\label{sec:suig-algorithm}

In the following, we propose an algorithm $\mathcal{A}_{L}$ in FSYNC such that the initial configuration is not edge-symmetric.

Before describing our strategy, we first provide some definitions that will be used throughout this section.  
In given configuration $C$, let $d_{C}$ be the largest even distance between any pair of occupied nodes and let $U_{C}$ be the set of occupied nodes at distance $d_{C}$ from another occupied node. 
If $C$ is node-symmetric, $U_C$ consists only of the two border nodes ($|U_C|=2$). 
Then, let $u$ and $u'$ be nodes in $U_C$.
By contrast, if $C$ is rigid, then $|U_C| \geq 2$ (refer to Lemma \ref{lem:even-exists}). 
Since each robot has a unique view in a rigid configuration, let $u \in U_C$ be the node that hosts the robots with the largest view among those on a node of $U_{C}$ and let $u'$ be the occupied node at distance $d_C$ from $u$. 
Observe that if there are two candidate nodes for $u'$, using the view of the robots again, we can uniquely elect one of the two which are at distance $d_C$ from $u$ (by taking the one with the largest view). 
That is, $u$ and $u'$ can be identified uniquely in $C$. 
We refer to $[u,u']$ by the target segment in $C$, and denote the number of occupied nodes in $[u,u']$ by $|[u,u']|$.
Finally, the set of occupied nodes which are not in $[u,u']$ is denoted by $O_{[u,u']}$.
Refer to Fig. \ref{fig:target-seg} for an example.  

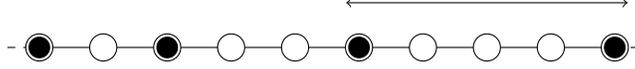
\begin{figure}[t]
    \begin{center}
        \begin{tikzpicture}[scale=0.85, transform shape]
            \tikzstyle{robot}=[circle,fill=black!100,inner sep=0.12cm]
            \draw[-] (-0.5,0) -- (8.5,0);
            \draw[dashed] (-1,0) -- (-0.5,0);
            \draw[dashed] (8.5,0) -- (9,0);
            \def \n {9}
            \foreach \s in {0,...,\n}
            {
             \node[draw, circle, fill=white, inner sep=0.15cm] at (\s-0.5,0) {};
            }
            \node[opacity=0, text opacity=1] at (4.5,0.4) {$u$};
            \node[opacity=0, text opacity=1] at (8.5,0.45) {$u'$};
            \node[opacity=0, text opacity=1] at (-0.5,0.4) {$v$};
            \node[opacity=0, text opacity=1] at (1.5,0.45) {$v'$};
            \node[opacity=0, text opacity=1] at (6.5,0.9) {target segment};
            \draw[<->] (4.3,0.7) -- (8.7,0.7);
            \node[robot] at (-0.5,0) {};
            \node[robot] at (1.5,0) {};
            \node[robot] at (4.5,0) {};
            \node[robot] at (8.5,0) {};
        \end{tikzpicture}
        \caption{Instance of a configuration in which $[u,u']$ is the target segment. Nodes $v$ and $v'$ are both in $O_{[u,u']}$. }\label{fig:target-seg}
    \end{center}
\end{figure}

We first observe that in any configuration $C$ in which there are at least three occupied nodes, there exist at least two occupied nodes at an even distance. 

\begin{lemma}\label{lem:even-exists}
In any configuration $C$ where there are at least three occupied nodes, there exists at least one pair of robots at an even distance from each other. 
\end{lemma}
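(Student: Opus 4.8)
The plan is to reduce the statement to an elementary parity (pigeonhole) argument on the integer coordinates of the nodes. Recall that each node $u_i$ carries the integer index $i$, and that for any two nodes $u_i,u_j$ we have $dist(u_i,u_j)=|i-j|$. So a pair of occupied nodes is at even distance if and only if their indices have the same parity.

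First I would fix three occupied nodes, say $u_a$, $u_b$, $u_c$ with $a<b<c$; this is possible precisely because $C$ has at least three occupied nodes. The three indices $a,b,c$ each belong to one of the two residue classes modulo $2$. By the pigeonhole principle, at least two of them lie in the same class; say $x,y\in\{a,b,c\}$ with $x\equiv y \pmod 2$. Then $dist(u_x,u_y)=|x-y|$ is even, and since $x\neq y$ these are two distinct occupied nodes, hosting two (distinct) robots at even distance from each other. This yields the claim. (Equivalently, one can phrase it without pigeonhole: among the three pairwise distances $b-a$, $c-b$, $c-a$, if the first two are both odd then the third, being their sum, is even; and otherwise one of the first two is already even.)

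There is essentially no obstacle here: the only point requiring the slightest care is the identification of graph distance on the line with the absolute difference of indices, which is immediate from the definition of the line in Section~\ref{sec:model}, and the observation that the argument is insensitive to multiplicities, since we only ever use the set of occupied nodes. Note also that the statement and proof are valid whether the line is finite or infinite. This lemma is exactly what is needed to justify that $d_C$ and $U_C$ are well defined (with $|U_C|\ge 2$) in the rigid case, as used in the definition of the target segment.
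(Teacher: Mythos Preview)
Your proof is correct and essentially matches the paper's own argument: the paper picks three occupied nodes with one between the other two, assumes both adjacent gaps are odd, and observes their sum (the outer distance) is even, which is precisely your parenthetical ``equivalently'' formulation. Your pigeonhole phrasing via parities of indices is a cosmetic variant of the same idea.
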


\begin{proof}
Assume by contradiction that the lemma does not hold and let $u_1$, $u_2$, and $u_3$ be three distinct occupied nodes such that $u_2$ is located between $u_1$ and $u_3$. 
Assume w.l.o.g. that $d_1=dist(u_1,u_2)$ and $d_2=dist(u_2,u_3)$. 
By the assumption, both $d_1$ and $d_2$ are odd. 
That is, there exist $q, q' \in \mathbb{N}$, $d_1= 2q+1$ and $d_2= 2q'+1$. 
Hence, $d_1+d_2$ is even since $d_1+d_2= 2(q+q'+1)$.
A contradiction. \qed
\end{proof}

We propose, in the following, an algorithm named $\mathcal{A}_{L}$ that solves the SUIG problem on line-shaped networks. 
The main idea of the proposed strategy is to squeeze the robots by reducing the distance between the two border robots such that they eventually meet on a single node. 
To guarantee this meeting, robots aim to reach a configuration in which the border robots are at an even distance. 
Note that an edge-symmetric configuration can be reached during this process.
However, in this case, we guarantee that not only one robot has crashed but also, eventually, when there are only two occupied adjacent nodes, the crashed robot is alone on its node ensuring the gathering. 
Let $C$ be the current configuration.
In the following, we describe robots' behavior depending on $C$: 

\begin{enumerate}
\item $C$ is edge-symmetric.
The border robots move toward an occupied node. 
\item Otherwise.
Let $[u,u']$ be the target segment in $C$, and let $O_{[u,u']}$ be the set of robots located on a node, not part of $[u,u']$. 
Robots behave differently depending on the size of $O_{[u,u']}$:
    \begin{enumerate}
    \item $|O_{[u,u']}| = 0$ ($u$ and $u'$ host the border robots). 
    In this case, the border robots are the ones to move. 
    Their destination is their adjacent node toward an occupied node (refer to Fig. \ref{fig:0} for an example). 
        
    \item $|O_{[u,u']}| = 1$. 
    Let $u_v$ be the unique occupied node in $O_{[u,u']}$. 
    Assume w.l.o.g. that $u_v$ is closer to $u$ than to $u'$. 
    We address only the case in which $dist(u,u_v)$ is odd. 
    (Note that if $dist(u,u_v)$ is even, the border nodes are at an even distance and $|O_{[u,u']}| = 0$.) 
    \begin{itemize}
    \item If the number of occupied nodes is three and $u_v$ and $u$ are two adjacent nodes, robots on both $u$ and $u'$ are the ones to move. 
    Their destination is their adjacent node toward $u_v$ (refer to Fig. \ref{fig:1-bis}). 
    \item Otherwise, all robots are ordered to move. 
    More precisely, robots on a node of $[u,u']$ move to an adjacent node toward $u_v$ while the robots on $u_v$ move to an adjacent node toward $u$ (refer to Fig. \ref{fig:1}).
    \end{itemize}
    \item $|O_{[u,u']}| >1$. 
    In this case, the robots that are in $O_{[u,u']}$ move toward a node of $[u,u']$ (refer to Fig. \ref{fig:+1}). 
\end{enumerate}
\end{enumerate}

\begin{figure}[t]
    \begin{tabular}{c}
    \begin{minipage}[t]{\hsize}
    \begin{center}
        \begin{tikzpicture}[scale=0.85, transform shape]
            \tikzstyle{robot}=[circle,fill=black!100,inner sep=0.12cm]
            \draw[-] (-0.5,0) -- (7.5,0);
            \draw[dashed] (-1,0) -- (-0.5,0);
            \draw[dashed] (7.5,0) -- (8,0);
            \def \n {8}
            \foreach \s in {0,...,\n}
            {
             \node[draw, circle, fill=white, inner sep=0.15cm] at (\s-0.5,0) {};
            }
            \node[opacity=0, text opacity=1] at (-0.5,0.4) {$u$};
            \node[opacity=0, text opacity=1] at (7.5,0.45) {$u'$};
            \node[opacity=0, text opacity=1] at (3.5,0.9) {target segment};
            \node[draw, circle, inner sep=0.16cm, opacity=0] (m) at (0.5,0.0) {};
            \node[draw, circle, inner sep=0.16cm, opacity=0] (n) at (6.5,0.0) {};
            \draw[<->] (-0.75,0.7) -- (7.9,0.7);
            \node[robot] (a) at (-0.5,0) {};
            \node[robot] at (1.5,0) {};
            \node[robot] at (4.5,0) {};
            \node[robot] (b) at (7.5,0) {};
            \path (a) edge[bend left=50,->, red] node [left] {} (m);
            \path (b) edge[bend left=50,->, red] node [left] {} (n);
        \end{tikzpicture}
        \caption{Instance of a configuration in which $|O_{[u,u']}| = 0$. Robots at the border move toward an occupied node as shown by the red arrows.}\label{fig:0}
    \end{center}
\end{minipage}\\

\begin{minipage}[t]{\hsize}
    \begin{center}
        \begin{tikzpicture}[scale=0.85, transform shape]
            \tikzstyle{robot}=[circle,fill=black!100,inner sep=0.12cm]
         
            \draw[-] (-0.5,0) -- (8.5,0);
            \draw[dashed] (-1,0) -- (-0.5,0);
            \draw[dashed] (8.5,0) -- (9,0);
            \def \n {9}
            \foreach \s in {0,...,\n}
            {
             \node[draw, circle, fill=white, inner sep=0.15cm] at (\s-0.5,0) {};
            }
           \node[opacity=0, text opacity=1] at (0.5,0.4) {$u$};
            \node[opacity=0, text opacity=1] at (8.5,0.45) {$u'$};
            \node[opacity=0, text opacity=1] at (3.5,0.9) {target segment};
            \draw[<->] (0.3,0.7) -- (8.7,0.7);
            \node[robot] (a) at (-0.5,0) {};
            \node[robot] (a) at (0.5,0) {};
            \node[opacity=0, text opacity=1] at (-0.5,0.45) {$u_v$};
\node[robot] (d) at (8.5,0) {};
            \node[draw, circle, inner sep=0.16cm, opacity=0] (m) at (-0.5,0.0) {};
\node[draw, circle, inner sep=0.16cm, opacity=0] (p) at (7.5,0.0) {};
            
            \path (a) edge[bend left=50,->, red] node [left] {} (m);
\path (d) edge[bend left=50,->, red] node [left] {} (p);
        \end{tikzpicture}
        \caption{Instance of a configuration of the special case in which $|O_{[u,u']}| = 1$ with only three occupied nodes. Robots on $u$ and $u'$ move to their adjacent node toward the node in $O_{[u,u']}$, node $u_v$,  as shown by the red arrows. }\label{fig:1-bis}
    \end{center}
\end{minipage} \\

\begin{minipage}[t]{\hsize}
\begin{center}
        \begin{tikzpicture}[scale=0.85, transform shape]
            \tikzstyle{robot}=[circle,fill=black!100,inner sep=0.12cm]
         
            \draw[-] (-0.5,0) -- (8.5,0);
            \draw[dashed] (-1,0) -- (-0.5,0);
            \draw[dashed] (8.5,0) -- (9,0);
            \def \n {9}
            \foreach \s in {0,...,\n}
            {
             \node[draw, circle, fill=white, inner sep=0.15cm] at (\s-0.5,0) {};
            }
            \node[opacity=0, text opacity=1] at (2.5,0.4) {$u$};
            \node[opacity=0, text opacity=1] at (8.5,0.45) {$u'$};
            \node[opacity=0, text opacity=1] at (5.5,0.9) {target segment};
            \draw[<->] (2.3,0.7) -- (8.7,0.7);
            \node[robot] (a) at (-0.5,0) {};
            \node[opacity=0, text opacity=1] at (-0.5,0.45) {$u_v$};
            \node[robot] (b) at (2.5,0) {};
            \node[robot] (c) at (4.5,0) {};
            \node[robot] (d) at (8.5,0) {};
            \node[draw, circle, inner sep=0.16cm, opacity=0] (m) at (0.5,0.0) {};
            \node[draw, circle, inner sep=0.16cm, opacity=0] (n) at (1.5,0.0) {};
            \node[draw, circle, inner sep=0.16cm, opacity=0] (o) at (3.5,0.0) {};
            \node[draw, circle, inner sep=0.16cm, opacity=0] (p) at (7.5,0.0) {};
            
            \path (a) edge[bend left=50,->, red] node [left] {} (m);
            \path (b) edge[bend left=50,->, red] node [left] {} (n);
            \path (c) edge[bend left=50,->, red] node [left] {} (o);
            \path (d) edge[bend left=50,->, red] node [left] {} (p);
        \end{tikzpicture}
        \caption{Instance of a configuration in which $|O_{[u,u']}| = 1$. Robots in $[u,u']$ move to their adjacent node toward the robot in $O_{[u,u']}$ while the robots in $O_{[u,u']}$ move toward the target segment as shown by the red arrows. }\label{fig:1}
    \end{center}
\end{minipage}\\

\begin{minipage}[t]{\hsize}
    \begin{center}
        \begin{tikzpicture}[scale=0.85, transform shape]
            \tikzstyle{robot}=[circle,fill=black!100,inner sep=0.12cm]
          
            \draw[-] (-0.5,0) -- (8.5,0);
            \draw[dashed] (-1,0) -- (-0.5,0);
            \draw[dashed] (8.5,0) -- (9,0);
            \def \n {9}
            \foreach \s in {0,...,\n}
            {
             \node[draw, circle, fill=white, inner sep=0.15cm] at (\s-0.5,0) {};
            }
            \node[opacity=0, text opacity=1] at (4.5,0.4) {$u$};
            \node[opacity=0, text opacity=1] at (8.5,0.45) {$u'$};
            \node[opacity=0, text opacity=1] at (6.5,0.9) {target segment};
            \draw[<->] (4.3,0.7) -- (8.7,0.7);
            \node[robot] (a) at (-0.5,0) {};
            \node[robot] (b) at (1.5,0) {};
            \node[robot] at (4.5,0) {};
            \node[robot] at (8.5,0) {};
            \node[draw, circle, inner sep=0.16cm, opacity=0] (m) at (0.5,0.0) {};
            \node[draw, circle, inner sep=0.16cm, opacity=0] (n) at (2.5,0.0) {};
            \path (a) edge[bend left=50,->, red] node [left] {} (m);
            \path (b) edge[bend left=50,->, red] node [left] {} (n);
        \end{tikzpicture}
        \caption{Instance of a configuration in which $|O_{[u,u']}| > 1$. Robots in $O_{[u,u']}$ move to their adjacent node toward the target segment as shown by the red arrows. }\label{fig:+1}
    \end{center}
\end{minipage}
\end{tabular}
\end{figure}

\subsection{Proof of the Correctness}\label{sec:suig-proof}

We prove in the following the correctness of $\mathcal{A}_L$. 

\begin{lemma}\label{lem:same-side}
Let $C$ be a non-edge-symmetric configuration, and let $[u,u']$ be the target segment. If $|O_{[u,u']}|>1$, then all nodes in $O_{[u,u']}$ are located on the same side of $[u,u']$. 
\end{lemma}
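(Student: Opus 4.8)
The plan is to argue by contradiction, exploiting the defining property of the target segment: $d_C = dist(u,u')$ is the \emph{largest even distance} realized between any two occupied nodes of $C$ (this is well-defined here because $C$ is not edge-symmetric, so $[u,u']$ is identified unambiguously, and because of Lemma~\ref{lem:even-exists} an even pair exists whenever there are at least three occupied nodes). Suppose, for contradiction, that $O_{[u,u']}$ is \emph{not} contained in a single side of $[u,u']$. Fix a linear coordinate along the line and write $u=u_a$, $u'=u_b$ with $a<b$, so $b-a=d_C$ is even and every node of $O_{[u,u']}$ has index $<a$ or index $>b$. By assumption we may pick an occupied node $v$ with $\mathrm{index}(v)<a$ and an occupied node $w$ with $\mathrm{index}(w)>b$.

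First I would record the two obvious inequalities $dist(v,u') = b-\mathrm{index}(v) > b-a = d_C$ and $dist(u,w) = \mathrm{index}(w)-a > d_C$, both strict precisely because $v,w\notin[u,u']$. Since $v,u'$ are occupied and $u,w$ are occupied, if either $dist(v,u')$ or $dist(u,w)$ were even it would be an even inter-occupied-node distance strictly exceeding $d_C$, contradicting maximality of $d_C$ straight away. Hence both $dist(v,u')$ and $dist(u,w)$ must be odd. Next I would use the identity $dist(v,w) = dist(v,u') + dist(u,w) - dist(u,u') = dist(v,u') + dist(u,w) - d_C$, which holds because $v$ and $w$ sit on opposite sides of $[u,u']$ so that the segment $[u,u']$ is counted twice in the sum $dist(v,u')+dist(u,w)$ (a one-line check from the indices: $(b-\mathrm{index}(v))+(\mathrm{index}(w)-a)-(b-a)=\mathrm{index}(w)-\mathrm{index}(v)$). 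The right-hand side equals $\text{odd}+\text{odd}-\text{even}$, hence is even, and it is strictly greater than $d_C$ because each summand $dist(v,u')$ and $dist(u,w)$ already exceeds $d_C$. Thus $v$ and $w$ are occupied nodes at an even distance larger than $d_C$, contradicting the definition of $d_C$; therefore all of $O_{[u,u']}$ lies on one side of $[u,u']$.

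The argument is short, and the only place demanding care — which I would flag as the main obstacle — is the parity/sign bookkeeping: one must get the sign right in $dist(v,w)=dist(v,u')+dist(u,w)-d_C$ (it is valid exactly when $v$ and $w$ are on \emph{opposite} sides, which is the configuration we reduced to), and keep the inequalities strict, which is guaranteed by $v,w\notin[u,u']$. The non-edge-symmetry hypothesis plays no role beyond making the target segment $[u,u']$ (hence $d_C$ and $O_{[u,u']}$) well-defined, so no further case analysis on node-symmetric versus rigid $C$ is needed.
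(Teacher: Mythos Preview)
Your proof is correct and follows essentially the same route as the paper: argue by contradiction, pick occupied nodes on opposite sides of $[u,u']$, and use parity together with the maximality of $d_C$ to exhibit an even inter-occupied distance strictly larger than $d_C$. The only cosmetic difference is that the paper cases on the parities of $dist(u_1,u)$ and $dist(u_2,u')$ directly, whereas you first rule out the ``one side already even'' cases via $dist(v,u')$ and $dist(u,w)$ and then handle the remaining ``both odd'' case with the additive identity for $dist(v,w)$; the underlying parity check is the same.
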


\begin{proof}
Assume by contradiction that the lemma does not hold and assume that there exists a pair of nodes $u_1, u_2 \in O_{[u,u']}$ such that $u_1$ and $u_2$ are on different sides of $[u,u']$. 
Assume w.l.o.g. that $u_1$ is the closest to $u$ and $u_2$ is the closest to $u'$. 
Let $dist(u_1,u) = d_1$ and $dist(u_2,u')=d_2$. 
If $d_1$ and $d_2$ are both even or both odd, $dist(u_1,u_2)$ is even. 
Since $dist(u_1,u_2) > dist(u,u')$, $[u,u']$ is not the target segment, a contradiction.
Otherwise, assume w.l.o.g. that $d_1$ is even and $d_2$ is odd, then $dist(u_1,u')$ is even. Since $dist(u_1,u') > dist(u,u')$, $[u,u']$ is not the target segment, a contradiction.  
\qed
\end{proof}

\begin{theoremEnd}[sss]{lemma}\label{theo:no-crash-ok}
Starting from a non-edge-symmetric configuration $C$, if no robot crashes, all robots gather without multiplicity detection in $O(\mathcal{D})$ by executing $\mathcal{A}_L$, where $\mathcal{D}$ denotes the distance between the two borders in $C$.
\end{theoremEnd}

\begin{proofEnd}
Assume by contradiction that the theorem does not hold. 
Let $C_t$ be the current configuration. The following two cases are possible:  

\begin{enumerate}
\item \label{case:border-even} $\mathcal{D}$ is even.
Let $u_1, u_2, \dots, u_m$ be the sequence of nodes between the two border robots such that $u_1$ and $u_m$ host a border robot. 
Note that the target segment is, in this case, $[u_1,u_m]$ and $|O_{[u_1,u_m]}|=0$.
By $A_L$, robots on $u_1$ and $u_m$ are the ones to move toward respectively $u_2$ and $u_{m-1}$. 
By moving, the distance between the border robots decreases by two and hence remains even. 
By $A_L$, robots on $u_2$ and $u_{m-1}$ are now the new borders and are the ones to move to, respectively, $u_3$ and $u_{m-2}$. 
As previously, the distance between the border robots decreases by two. 
By repeating the same process, after $\lfloor\frac{\mathcal{D}-1}{2}\rfloor$ rounds, the border robots become at distance two from each other on respectively nodes $u_{\frac{\mathcal{D}}{2}}$ and $u_{m-(\frac{\mathcal{D}}{2}+1)}$. 
After one additional round, all robots meet on $u_{(\frac{\mathcal{D}}{2})+1}$. 
Hence, the robots gather after $\frac{\mathcal{D}}{2}$ rounds. 
A contradiction.  
    
\item $\mathcal{D}$ is odd.
By Lemma \ref{lem:even-exists}, we know that there is at least one pair of robots which are at an even distance from each other. 
Moreover, $C_t$ is not symmetric as we retrieve case \ref{case:border-even} otherwise. 
Let $u$ and $u'$ be the two occupied nodes with respect to $\mathcal{A}_L$ such that $[u,u']$ is the target segment. 
Assume w.l.o.g. that $v$, the border robot in $O_{[u,u']}$, is closer to $u$.
Let $\mathcal{D}' = dist(v,u)$. 
Two cases are possible: 
    \begin{enumerate}
        \item $|O_{[u,u']}|>1$. 
        Then, all nodes in $O_{[u,u']}$ are on the same side of $[u,u']$ by Lemma \ref{lem:same-side}. 
        By the assumption, they are closer to $u$.
        By $A_L$, robots on the nodes in $O_{[u,u']}$ in $C_t$ move toward the nodes of $[u,u']$. 
        That is, after one round, $v$ becomes at an even distance from $u'$, and we retrieve case \ref{case:border-even}. Hence, we can deduce that the gathering is achieved after $\frac{\mathcal{D}-1}{2}+1$ rounds. 
        A contradiction.  
        
        \item $|O_{[u,u']}|=1$. 
        By the assumption and w.l.o.g., $dist(v,u) < dist(v, u')$. 
        Let $u_{v_0}, u_{v_1}, u_{v_2}, \dots u_{v_x}, \dots, u_{v_x'}$ be the sequence of nodes from $u_{v_0}$ toward $u$ such that $u_{v_0}=v$, $u_{v_x}=u$ and $u_{v_x'} = u'$ (refer to Fig. \ref{fig:1exp} for an example). 
        Let $\mathcal{D}'=dist(u_{v_0},u_{v_x})$ (in Fig. \ref{fig:1exp} - configuration $C_{t_0}$, $u_{v_x} = u_{v_5}$ and $\mathcal{D}'=5$). 
        By $\mathcal{A}_L$, two cases are possible:
        \begin{enumerate}
            \item The number of occupied nodes is equal to three and $u_{v_0}$ is adjacent to $u_{v_x}$ then, the robots on $[u,u']$ ($[u_{v_x},u_{v_x'}]$) move toward $u_{v_0}$. 
            As there is no crashed robot, after one round, the border robots become at an even distance, and we retrieve case \ref{case:border-even}. 
            Thus, the gathering is achieved in $\frac{\mathcal{D}-1}{2}+1$ rounds.
            
            \item Otherwise, all robots on a node in $[u,u']$ move toward $u_{v_0}$ while the robots on node $u_{v_0}$ move toward $u$. 
            Let $C_{t_1}$ be the configuration reached once the robots move. 
            Since there is no crashed robot by assumption, all the robots move and the distance between every robot on a node in $[u,u']$ and those on $u_v$ decreases by two. 
            Hence, $[u_{v_{x-1}}, u_{v_{x'-1}}]$ become the target segment in $C_{t_1}$ and $dist(u,u') = dist(u_{v_{x-1}}, u_{v_{x'-1}})$ in $C_{t1}$. 
            Moreover, $dist(u_{v_0}, u)$ in $C_t$ is equal to $dist(u_{v_1}, u_{v_{x-1}})-2$ in $C_{t_1}$. 
            By $A_L$, the robots on a node in $[u_{v_{x-1}}, u_{v_{x'-1}}]$ move toward $u_{v_1}$ while the robots on $u_{v_1}$ move toward $u_{v_{x-1}}$. 
            In the configuration reached $C_{t_2}$, $[u_{v_{x-2}}, u_{v_{x'-2}}]$ is the target segment. 
            By repeating the same process, after ${\lfloor \frac{\mathcal{D}'}{2} \rfloor}$ rounds, in configuration $C_{t_{{\lfloor \frac{x}{2} \rfloor}}}$, the robots initially on $u$ in $C_t$ become located on node $u_{v_{\lceil \frac{x}{2} \rceil}}$, the ones on $u_{v_{0}}$ on $u_{v_{\lfloor \frac{x}{2} \rfloor}}$ and the ones on $u'$ on node $u_{v_{x' -\lfloor \frac{x}{2} \rfloor}}$ (refer to Fig. \ref{fig:1exp} for an example). 
            Again, as there is no crashed robot, $dist(u,u')$ in $C_t$ is equal to $dist(u_{v_{\lceil \frac{x}{2} \rceil}}, u_{v_{x' -\lfloor \frac{x}{2} \rfloor}})$ and $[u_{v_{\lceil \frac{x}{2} \rceil}}, u_{v_{x' -\lfloor \frac{x}{2} \rfloor}}]$ is the target segment in  $C_{t_{\lfloor\frac{x}{2} \rfloor}}$. 
            After one additional round, we retrieve case \ref{case:border-even}.
            We can deduce that the gathering is achieved after $\frac{\mathcal{D}-1}{2}$ rounds, $\frac{\mathcal{D'}-1}{2}$ rounds are needed for the borders to become at an even distance and then $\frac{\mathcal{D}-\mathcal{D}'}{2}$ rounds for the robots to gather. 
        \end{enumerate}
        
    \end{enumerate}
      
From the cases above, we can deduce that robots gather in $O(\mathcal{D})$ rounds.\qed
\end{enumerate}

\begin{figure}[t]\begin{center}
        \begin{tikzpicture}[scale=0.8, transform shape]
            \node[opacity=0, text opacity=1] at (-1.5,0) {$C_{t_0}$};
            \tikzstyle{robot}=[circle,fill=black!100,inner sep=0.12cm];
            
            \draw[dashed] (-1,0) -- (-0.5,0);
            \draw[dashed] (10.5,0) -- (11,0);
            \draw[-] (-0.5,0) -- (10.5,0);
            \def \n {11}
            \foreach \s in {0,...,\n}
            {
             \node[draw, circle, fill=white, inner sep=0.15cm] at (\s-0.5,0) {};
            }
          
            \node[opacity=0, text opacity=1] at (4.5,0.4) {$u$};
            \node[opacity=0, text opacity=1] at (10.5,0.45) {$u'$};
            \node[opacity=0, text opacity=1] at (7,0.9) {target segment};
            \draw[<->] (4.3,0.7) -- (10.7,0.7);
            \node[robot] (a) at (-0.5,0) {};
            
            \foreach \s in {0,...,\n}
            {
            \node[opacity=0, text opacity=1] at (\s-0.5,-0.45) {$u_{v_{\s}}$};
}
            
            \node[opacity=0, text opacity=1] at (-0.5,-0.45) {$u_{v_0}$};
            \node[robot] (b) at (4.5,0) {};
            \node[robot] (c) at (6.5,0) {};
            \node[robot] (d) at (10.5,0) {};
            \node[draw, circle, inner sep=0.16cm, opacity=0] (m) at (0.5,0.0) {};
            \node[draw, circle, inner sep=0.16cm, opacity=0] (n) at (3.5,0.0) {};
            \node[draw, circle, inner sep=0.16cm, opacity=0] (o) at (5.5,0.0) {};
            \node[draw, circle, inner sep=0.16cm, opacity=0] (p) at (9.5,0.0) {};
            
            \path (a) edge[bend left=50,->, red] node [left] {} (m);
            \path (b) edge[bend left=50,->, red] node [left] {} (n);
            \path (c) edge[bend left=50,->, red] node [left] {} (o);
            \path (d) edge[bend left=50,->, red] node [left] {} (p);
        \end{tikzpicture}
        
        \begin{tikzpicture}[scale=0.8, transform shape]
        \node[opacity=0, text opacity=1] at (-1.5,0) {$C_{t_1}$};
            \tikzstyle{robot}=[circle,fill=black!100,inner sep=0.12cm];
            \draw[dashed] (-1,0) -- (-0.5,0);
            \draw[dashed] (10.5,0) -- (11,0);
            \draw[-] (-0.5,0) -- (10.5,0);
            \def \n {11}
            \foreach \s in {0,...,\n}
            {
             \node[draw, circle, fill=white, inner sep=0.15cm] at (\s-0.5,0) {};
            }
\node[opacity=0, text opacity=1] at (6,0.9) {target segment};
            \draw[<->] (3.3,0.7) -- (9.7,0.7);
            \node[robot] (a) at (0.5,0) {};
            
            \foreach \s in {0,...,\n}
            {
            \node[opacity=0, text opacity=1] at (\s-0.5,-0.45) {$u_{v_{\s}}$};
}
            
\node[robot] (b) at (3.5,0) {};
            \node[robot] (c) at (5.5,0) {};
            \node[robot] (d) at (9.5,0) {};
            \node[draw, circle, inner sep=0.16cm, opacity=0] (m) at (1.5,0.0) {};
            \node[draw, circle, inner sep=0.16cm, opacity=0] (n) at (2.5,0.0) {};
            \node[draw, circle, inner sep=0.16cm, opacity=0] (o) at (4.5,0.0) {};
            \node[draw, circle, inner sep=0.16cm, opacity=0] (p) at (8.5,0.0) {};
            
            \path (a) edge[bend left=50,->, red] node [left] {} (m);
            \path (b) edge[bend left=50,->, red] node [left] {} (n);
            \path (c) edge[bend left=50,->, red] node [left] {} (o);
            \path (d) edge[bend left=50,->, red] node [left] {} (p);
        \end{tikzpicture}
        
        \begin{tikzpicture}[scale=0.8, transform shape]
            \node[opacity=0, text opacity=1] at (-1.5,0) {$C_{t_2}$};
            \tikzstyle{robot}=[circle,fill=black!100,inner sep=0.12cm]
            \draw[dashed] (-1,0) -- (-0.5,0);
            \draw[dashed] (10.5,0) -- (11,0);
            \draw[-] (-0.5,0) -- (10.5,0);
            \def \n {11}
            \foreach \s in {0,...,\n}
            {
             \node[draw, circle, fill=white, inner sep=0.15cm] at (\s-0.5,0) {};
            }
\node[opacity=0, text opacity=1] at (5,0.9) {target segment};
            \draw[<->] (2.3,0.7) -- (8.7,0.7);
            \node[robot] (a) at (1.5,0) {};
            
            \foreach \s in {0,...,\n}
            {
            \node[opacity=0, text opacity=1] at (\s-0.5,-0.45) {$u_{v_{\s}}$};
}
            
\node[robot] (b) at (2.5,0) {};
            \node[robot] (c) at (4.5,0) {};
            \node[robot] (d) at (8.5,0) {};
            \node[draw, circle, inner sep=0.16cm, opacity=0] (m) at (2.5,0.0) {};
            \node[draw, circle, inner sep=0.16cm, opacity=0] (n) at (1.5,0.0) {};
            \node[draw, circle, inner sep=0.16cm, opacity=0] (o) at (3.5,0.0) {};
            \node[draw, circle, inner sep=0.16cm, opacity=0] (p) at (7.5,0.0) {};
            
            \path (a) edge[bend left=50,->, red] node [left] {} (m);
            \path (b) edge[bend left=50,->, red] node [left] {} (n);
            \path (c) edge[bend left=50,->, red] node [left] {} (o);
            \path (d) edge[bend left=50,->, red] node [left] {} (p);
        \end{tikzpicture}
        
         \begin{tikzpicture}[scale=0.8, transform shape]
         \node[opacity=0, text opacity=1] at (-1.5,0) {$C_{t_3}$};
            \tikzstyle{robot}=[circle,fill=black!100,inner sep=0.12cm];
            \draw[dashed] (-1,0) -- (-0.5,0);
            \draw[dashed] (10.5,0) -- (11,0);
            \draw[-] (-0.5,0) -- (10.5,0);
            \def \n {11}
            \foreach \s in {0,...,\n}
            {
             \node[draw, circle, fill=white, inner sep=0.15cm] at (\s-0.5,0) {};
            }
\node[opacity=0, text opacity=1] at (4,0.9) {target segment};
            \draw[<->] (1.3,0.7) -- (7.7,0.7);
            \node[robot] (a) at (2.5,0) {};
            
            \foreach \s in {0,...,\n}
            {
            \node[opacity=0, text opacity=1] at (\s-0.5,-0.45) {$u_{v_{\s}}$};
}
            
\node[robot] (b) at (1.5,0) {};
            \node[robot] (c) at (3.5,0) {};
            \node[robot] (d) at (7.5,0) {};
            \node[draw, circle, inner sep=0.16cm, opacity=0] (m) at (2.5,0.0) {};
\node[draw, circle, inner sep=0.16cm, opacity=0] (p) at (6.5,0.0) {};
            
            \path (b) edge[bend left=50,->, red] node [left] {} (m);
\path (d) edge[bend left=50,->, red] node [left] {} (p);
        \end{tikzpicture}
        \caption{Instance of an execution starting from a configuration in which $|O_{[u,u']}| = 1$ and assuming no crashes.}\label{fig:1exp}
    \end{center}
\end{figure}
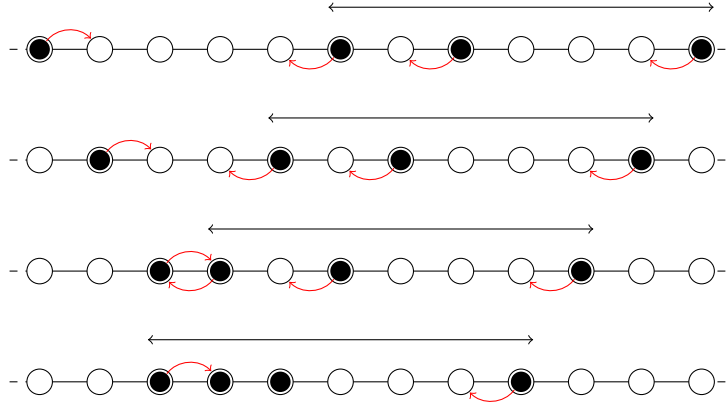

\end{proofEnd}

We focus in the following on the case in which a single robot crashes.

\begin{lemma}\label{lem:gathering}
Starting from a configuration $C$ where there are only two occupied nodes at distance $\mathcal{D}>1$, one hosting the crashed robot, gathering is achieved in $\mathcal{D}$ rounds, where $\mathcal{D}$ denotes the distance between the two borders in $C$.
\end{lemma}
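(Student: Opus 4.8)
The plan is to reduce the claim to a single invariant: along the execution of $\mathcal{A}_L$ started from $C$, the distance between the two border nodes decreases by exactly one at every round until it reaches $0$, at which point every robot --- the crashed one included --- sits on the node of the crashed robot. Since this distance equals $\mathcal{D}$ in $C$, gathering is then reached after exactly $\mathcal{D}$ rounds.

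First I would set up notation and two easy facts. Let $a$ be the node holding the crashed robot and $b$ the other occupied node, with $dist(a,b)=\mathcal{D}$, oriented so that $b$ lies to the right of $a$; the $k-1\geq 1$ correct robots all lie on $a$ and/or $b$, at least one of them on $b$. Fact 1: in every configuration reachable from $C$ by $\mathcal{A}_L$, all correct robots stay inside the segment $[a,b]$, so $a$ remains the leftmost occupied node, hence a border node, and whenever $\mathcal{A}_L$ instructs a robot on $a$ to move it instructs it to move to the right --- which the crashed robot cannot do --- so the crashed robot never leaves $a$. Fact 2: robots have no multiplicity detection, so all correct robots sharing a node have the same view and make the same move, and a group of colocated correct robots can be tracked as one entity.

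The core is a finite case analysis of the three shapes a reachable configuration can take (up to the above orientation): \emph{(A0)} the crashed robot alone on $a$, all correct robots on one node at distance $\delta$; \emph{(A1)} the crashed robot together with a nonempty correct group $L$ on $a$, and a nonempty correct group $R$ on a node at distance $\delta$; \emph{(B)} the crashed robot on $a$, a correct group $L$ on $a+1$, and a correct group $R$ on a node at distance $j\geq 2$. For each shape I would apply $\mathcal{A}_L$ and record the successor. An \emph{(A0)} with parameter $\delta$ is node-symmetric ($\delta$ even) or edge-symmetric ($\delta$ odd) with $|O_{[u,u']}|=0$, so the border robots move toward each other, the correct group advances one step, and we get \emph{(A0)} with parameter $\delta-1$ --- or, when $\delta=1$, all correct robots step onto $a$ and we are done. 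An \emph{(A1)} with $\delta\geq 3$ likewise has its two border robots move toward each other, producing \emph{(B)} with rightmost node at distance $\delta-1$; when $\delta=2$ it produces \emph{(A0)} with parameter $1$. A \emph{(B)} with rightmost node at distance $j$ is rigid (node-symmetric if $j=2$), and according to the parity of $j$ the algorithm picks target segment $[a,a+j]$ ($j$ even) or $[a+1,a+j]$ ($j$ odd); in both cases this is the ``three occupied nodes, $|O_{[u,u']}|=1$, $u_v$ adjacent to $u$'' special case, so for $j$ even $R$ moves one step inward and $L$ stays on $a+1$, yielding \emph{(B)} with rightmost node at distance $j-1$ (or \emph{(A0)} with parameter $1$ when $j=2$), while for $j$ odd $L$ bounces back onto $a$ and $R$ moves one step inward, yielding \emph{(A1)} with parameter $j-1$. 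In every transition the distance from $a$ to the rightmost occupied node drops by exactly one, and that distance equals $1$ only in an \emph{(A0)} configuration, which gathers in one more round; hence after $\mathcal{D}$ rounds the distance is $0$ and all robots are on $a$.

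The step I expect to be the main obstacle is the \emph{(B)}-case: one must check that $\mathcal{A}_L$ genuinely selects the claimed target segment --- compute $d_C$, verify it is the largest even inter-robot distance, resolve the view comparison so that $u$ is the stated endpoint, and confirm that the branch that fires is indeed the three-node adjacent special case --- and that the small parameters ($\mathcal{D}\in\{2,3\}$, $\delta=2$, $j=2$) collapse correctly into the endgame without escaping the three shapes. Once this transition table is established, the bound of $\mathcal{D}$ rounds follows by induction on the distance to the rightmost occupied node.
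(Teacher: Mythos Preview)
Your approach is correct and in fact more explicit than the paper's. The paper splits first on whether the crashed robot is alone on its node, and then on the parity of $\mathcal{D}$: when the crashed robot is collocated with correct robots and $\mathcal{D}$ is even, it argues that after two rounds one returns to a two-occupied-node configuration with distance reduced by~$2$; when $\mathcal{D}$ is odd, after three rounds with distance reduced by~$3$; and finishes by induction on~$\mathcal{D}$. Your three-shape state machine \emph{(A0)/(A1)/(B)} unrolls those two- and three-round macro-steps into single-round transitions and shows the border distance drops by exactly one every round, yielding the exact $\mathcal{D}$-round bound directly rather than via a recurrence.

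One small correction in the \emph{(B)} case: for $j$ even, the target segment $[a,a+j]$ contains $a+1$, so $|O_{[a,a+j]}|=0$ and it is case~2(a) of $\mathcal{A}_L$ that fires, not the ``three occupied nodes, $|O_{[u,u']}|=1$, $u_v$ adjacent to $u$'' special case of~2(b). The two borders $a$ and $a+j$ are ordered to move inward while $L$ on $a+1$, not being a border, stays put. This still produces exactly the transition you state (the crashed robot ignores the order, $R$ steps inward, $L$ remains on $a+1$), so the mislabeling does not affect your transition table or the final bound; just fix the justification for that branch.
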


\begin{proof}
First observe that if the crashed robot is not collocated with a non-crashed robot, after one round the robots on the other border move towards the crashed robot location, and gathering is achieved after $\mathcal{D}$ rounds.

Now assume that the crashed robot is collocated with at least one non-crashed robot. 
If $\mathcal{D}=2$, then after one round, all non-crashed robots are located at the same node, adjacent to the crashed robot location, and after one more round, gathering is achieved.
If $\mathcal{D}>3$ is even, then after one round the crashed robot is alone, and the non-crashed robots form two multiplicity points and are the extremities of the target segment. So, after one more round, they both move towards the crashed robot location, and we reach a configuration with two occupied nodes, and the distance between them has decreased by two. By induction and the previous case, gathering is eventually achieved.
If $\mathcal{D}=3$, then similarly, one can show that in three rounds, gathering is achieved.
If $\mathcal{D}>3$ is odd, then similarly, one can show that after three rounds we reach configuration with two occupied nodes, and their distance has decreased by 3 (so, the distance is now even and we can apply one of the previous even cases).
\qed
\end{proof}

\begin{theoremEnd}[sss]{lemma}\label{lem:adjacent}
Let $C$ be a configuration where $[u,u']$ is the target segment with $O_{[u,u']} = \{v\}$, and w.l.o.g. $u'$ is a border. Let $\mathcal{D}$ be the distance between the two border nodes $u'$ and $v$. If $u'$ hosts a crashed robot and $dist(v,u)=1$, then gathering is achieved in $O(\mathcal{D})$ rounds.
\end{theoremEnd}

\begin{proofEnd}
Thanks to Lemma~\ref{lem:gathering}, it is enough to prove that, in $O(\mathcal{D})$ rounds, a configuration $C'$ in which there are only two occupied nodes $v$ and $u'$ with $dist(v,u')>1$ is reached. 

By $\mathcal{A}_L$, if the number of occupied nodes in $C$ is more than three, the robots in $[u,u']$ move toward $v$ while the robots on $v$ move toward a node toward $[u,u']$. 
That is, if there is a multiplicity on $u'$, all non-crashed robots move to their adjacent node toward $u$. 
Thus, after one round, $u'$ hosts only a crashed robot. 
As $u$ and $v$ are neighbors, the robots on these nodes simply exchange their positions. 
However, note that $[u,u']$ remains the target segment and the distance between the robots on $u$ and nodes $v'' \in [u,u']$ with $v''\ne u$ and $v'' \ne u'$ decreases.
In the reached configuration, the same robots are ordered to move. 
Hence, eventually, $u$ becomes adjacent to two occupied nodes. 
After one round, $|[u,u']|$ decreases. 
As $u$ and $v$ remain occupied and adjacent to each other, the robots in $[u,u']$ continue to move toward $u$ to eventually join it. 
Thus, by repeating this process, $|[u,u']|$ decreases until $|[u,u']|=2$.  
As the initial distance between the robots is less than $\mathcal{D}$, after at most $\mathcal{D}$ rounds, a configuration in which there are three occupied nodes $u$, $u'$ and $v$, is reached. 
In this special case, by $\mathcal{A}_L$, after one round, $v$ and $u'$ are the only occupied nodes. 
Observe that since $[u,u']$ is a target segment, $dist(u,u') \geq 2$ and hence $dist(v,u') \geq 3$. 
After that, by $\mathcal{A}_L$, robots on $v$ are the only ones to move. 
Hence the lemma holds.\qed
\end{proofEnd}

\begin{lemma}\label{lem:Notadjacent}
Let $C$ be a configuration where $[u,u']$ is the target segment, $|O_{[u,u']}|=1$ and w.l.o.g. $dist(u,v) < dist(u',v)$ where $v \in O_{[u,u']}$.  
Let $\mathcal{D}$ be the distance between the two border nodes $u'$ and $v$ in $C$.
If $u'$ hosts a crashed robot, then gathering is achieved in $O(\mathcal{D})$ rounds.
\end{lemma}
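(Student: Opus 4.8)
The plan is to follow an execution of $\mathcal{A}_L$ starting from $C$ and to show that within $O(\mathcal{D})$ rounds it reaches a configuration handled by Lemma~\ref{lem:gathering} or Lemma~\ref{lem:adjacent}, each of which then finishes the gathering in $O(\mathcal{D})$ further rounds. First, $dist(u,v)$ must be odd, for otherwise $dist(u',v)=dist(u,v)+d_C$ would be an even distance strictly larger than $d_C$, contradicting the choice of the target segment; and if $dist(u,v)=1$ the hypotheses of Lemma~\ref{lem:adjacent} already hold. So assume $dist(u,v)\ge 3$. Then $\mathcal{A}_L$ applies the ``otherwise'' branch of rule~2(b): all robots move, those of $[u,u']$ toward $v$ and the one(s) on $v$ toward $u$, except the crashed robot on $u'$, which stays. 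One round of this makes the former $v$-robot the unique leftmost occupied node, now at distance $\mathcal{D}-1$ from $u'$, while $u'$ remains the rightmost occupied node; hence after one round we are in a configuration with $|O_{[u,u']}|=0$ whose two borders sit at the even distance $\mathcal{D}-1$, one of them (still $u'$) carrying the crash.

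Next I would isolate the mechanism that forces termination. In every rule of $\mathcal{A}_L$ a robot that moves moves ``inward'' — a border robot toward an occupied node, an outside robot toward the target segment, and the two groups toward each other in rule~2(b) — so no robot ever leaves the current span of occupied nodes; since the crashed robot never moves and starts at the right end of the span, it stays the right border of the configuration forever. I would then use as monovariant $s(C)$, the distance between the leftmost occupied node and the crashed robot. The claim is that $s$ never increases and that, as long as the configuration is not already one to which Lemma~\ref{lem:gathering} or Lemma~\ref{lem:adjacent} applies, $s$ decreases by exactly one in every round except those in the $|O_{[u,u']}|>1$ branch described next. Indeed: in the edge-symmetric case and in the $|O_{[u,u']}|=0$ case only the non-crashed border moves, one step toward the interior, so $s$ drops; in the $|O_{[u,u']}|=1$ case with $dist(u,v)\ge 3$, the leftmost occupied node — which is $v$ when $v$ lies on the left, and the left endpoint $u$ of the target segment when $v$, equivalently the crash, lies on the right — is moved one step rightward, so $s$ drops; in the $|O_{[u,u']}|>1$ case, Lemma~\ref{lem:same-side} puts all outside robots on one side of $[u,u']$, and if that side is the non-crashed side the leftmost (an outside robot) moves right and $s$ drops, while if it is the crashed side $\mathcal{A}_L$ moves only the outside robots, toward the segment, and $s$ is unchanged for that round. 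Since all occupied nodes always lie inside a segment of length at most $\mathcal{D}$, these ``stalling'' rounds, during which outside robots are steadily pushed into and absorbed by the target segment, can occur only $O(\mathcal{D})$ times in total before $|O_{[u,u']}|$ returns to at most one and $s$ resumes decreasing.

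Consequently $s$ is a non-negative integer at most $\mathcal{D}$ that strictly decreases in all but $O(\mathcal{D})$ of the rounds, so within $O(\mathcal{D})$ rounds the configuration has at most three occupied nodes with the crash on a border. When two nodes are occupied this is exactly the setting of Lemma~\ref{lem:gathering}. When three nodes are occupied, the same monovariant argument, restricted to this small configuration space and splitting on which rule of $\mathcal{A}_L$ fires (using that the crash is the right border), shows that within $O(\mathcal{D})$ more rounds $\mathcal{A}_L$ produces either a two-node configuration (Lemma~\ref{lem:gathering}) or a configuration with $dist(u,v)=1$ and the crash on $u'$ (Lemma~\ref{lem:adjacent}). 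Summing the three $O(\mathcal{D})$ contributions gives gathering in $O(\mathcal{D})$ rounds.

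The main obstacle is the second paragraph: the configuration does not stay in the clean ``$|O_{[u,u']}|=1$ with the crash on the far border'' shape of the statement. Already after the first round it is a $|O_{[u,u']}|=0$ configuration dragging along arbitrarily many robots frozen in its interior, and as the squeeze proceeds the target segment keeps jumping and the crashed robot alternates between playing the lone outside robot $v$ and the far border $u'$ of the current target segment, according to the parity of the remaining span. Verifying that a single monovariant decreases uniformly across all of these regimes — in particular untangling the $|O_{[u,u']}|>1$ branch, where the target-segment robots stay put so that $s$ alone need not drop — and ruling out edge-symmetric configurations in which the crash sits off a border (so that rule~1 can never stall the progress) is where the bulk of the work lies.
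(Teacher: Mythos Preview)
Your route is genuinely different from the paper's. The paper does not set up a global span monovariant and does not attempt to cover all rule branches of $\mathcal{A}_L$. Instead it follows a fixed two-round pattern: from $C$ (with $m=dist(u,v)\ge 3$) one application of rule~2(b) moves the $v$-robots and the $u$-robots toward each other, so their mutual distance drops to $m-2$, while the borders become even (exactly as you also observe after one round). The paper then argues that the next round, with the crashed $u'$ still a border, again shrinks this $u$--$v$ distance and restores odd border distance, and that iterating this two-round step $\lceil m/4\rceil$ times drives the $u$--$v$ distance down to~$1$, at which point Lemma~\ref{lem:adjacent} applies directly. So the paper's measure of progress is local --- the gap between two designated groups of robots --- and the argument never leaves the $|O_{[u,u']}|\le 1$ world, reducing straight to Lemma~\ref{lem:adjacent} rather than to Lemma~\ref{lem:gathering}.

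Your monovariant $s$ is a natural alternative and, if the case analysis were completed, would give a more uniform proof potentially subsuming several of the surrounding lemmas. The gap is exactly the one you flag in your last paragraph: you assert but do not prove that the ``stalling'' rounds with $|O_{[u,u']}|>1$ on the crash side number $O(\mathcal{D})$ in total. Because the target segment is recomputed every round, a node can become ``outside'' again after having been inside, so ``steadily pushed into and absorbed'' does not by itself yield a bound, and the target endpoints staying put in rule~2(c) means no obvious secondary potential decreases either. The paper's two-round reduction is precisely the structural shortcut that lets one bypass this branch entirely; borrowing it would close your gap without the heavy case split you anticipate.
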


\begin{proof}
Recall that, since $v \in O_{[u,u']}$, $dist(u',v)$ is odd in $C$. If the distance $m$ between $u$ and $v$ is 1, the we apply Lemma~\ref{lem:adjacent}, otherwise, 
by $\mathcal{A}_L$, the robots in $v$ move to an empty node toward $u$, and all the other robots move towards $v$.
Thus, after one round, we reach a configuration $C_1$ where the robots on $v$ becomes at an even distance from $u'$, the crashed robot location.
Since the robots on $u$ are also ordered to move toward $v$, the distance between the robots at $v$ and $u$ decreases by two.

Again, as $u'$ hosts a crashed robot, after one more round, a configuration $C_2$ in which the borders are at an odd distance is reached again, and the distance between the robots at $u$ and $v$ is again decreased by two (or stay the same if they are adjacent in $C_2$ as they just swap their positions).

As the distance $m$ between $v$ and $u$ is odd in $C$ (otherwise, the border robots are at an even distance in $C$), we can repeat the same 2-round process ($\lceil m/4\rceil$ times) until we reach a configuration in which the robots at $u$ and $v$ are at distance 1 so, by Lemma \ref{lem:adjacent}, gathering is achieved. 
\qed
\end{proof}

\begin{theoremEnd}[sss]{lemma}\label{lem:border-even}
Starting from a configuration $C$ where the crashed robot is at a border, and the border robots are at an even distance $\mathcal{D}$, by executing $\mathcal{A}_L$, after $O(\mathcal{D}$) rounds, gathering is achieved. \end{theoremEnd}

\begin{proofEnd}
The proof is by induction on $\mathcal{D}$.
As the borders are at an even distance, by $\mathcal{A}_L$, these robots are the ones to move. 
However, as the crashed robot does not move, after one round, the distance between the border robots $\mathcal{D}_1$ becomes odd ($\mathcal{D}_1 \geq 3$). 
Observe that $\mathcal{D}_1= \mathcal{D}-1$. 

So if $\mathcal{D} = 2$, in the reached configuration $C''$, all the non-crashed robots are located at a node adjacent to the crashed robot location, and after one more round they all move towards the crashed robot location and the gathering is achieved.

If $\mathcal{D} > 2$, in the reached configuration $C''$, the following cases are possible:  
\begin{enumerate}
\item \label{case:edge-edge} The configuration $C''$ is edge-symmetric. 
By $\mathcal{A}_L$, the border robots are the ones to move and their destination is their adjacent node toward an occupied node.
Hence, at the next round, all robots located at one border move to their adjacent node toward an occupied node. 
In the configuration, the distance between the two border robots $\mathcal{D}_2$ is even and $\mathcal{D}_2=\mathcal{D}-2$.
    
\item \label{case:nosym}Otherwise. 
By Lemma \ref{lem:even-exists}, there exists a pair of occupied nodes that are at an even distance. 
As $C''$ is neither node-symmetric (otherwise, the borders are at even distance) nor edge-symmetric (otherwise, case~\ref{case:edge-edge} holds), two occupied nodes $u$ and $u'$ are uniquely identified such that $[u,u']$ is the target segment. 
Let $u_x$ be the node that hosts the crashed robot, two cases are possible: 
  \begin{enumerate}
      \item $u_x \in \{u,u'\}$. 
      Assume w.l.o.g. that $u_x = u'$. 
      Let us first consider the case in which $|O_{[u,u']}|=1$ and let $v \in O_{[u,u']}$. 
      If $dist(u,v)=1$, we are done by Lemma \ref{lem:adjacent}. 
      By contrast, if $dist(u,v)>1$ then, let $u, u_1, \dots u_m, v$ be the sequence of nodes from $u$ to $v$. 
      By Lemma \ref{lem:Notadjacent}, after one round, a configuration $C'''$ 
      
      Finally, if $|O_{[u,u']}|>1$, after one round, a configuration in which the borders are at an even distance $\mathcal{D}_2= \mathcal{D}-2$ is reached and we apply the induction hypothesis. 
      
      \item $u_x \in O_{[u,u']}$. 
      Assume w.l.o.g. that $dist(u,u_x)<dist(u',u_x)$ holds.
      Let $u_1$ be the closest occupied node to $u_x$, and $u_0$ be an adjacent node on the side of $u_1$. Consider the case when $|O_{[u,u']}|>1$ holds.
      \begin{itemize}
      \item If $u_x$ hosts a multiplicity, then after one round, the non-crashed robots move to $u_0$ and a configuration in which $[u',u_0]$ is the target segment is reached with $|O_{[u',u_0]}|=1$. 
      \item If $u_x$ hosts only a single robot (the crashed one), as only robots on $u_1$ are ordered to move toward $u$, after one round, a configuration in which $[u',u_1]$ is the target segment is reached with $|O_{[u',u_1]}|=1$.
      \end{itemize}
      Hence, in both scenarios, we retrieve the case in which $|O_{[u,u']}| = 1$. Let us now focus on the case in which $|O_{[u,u']}| = 1$. 
      By $\mathcal{A}_L$, all robots move toward $u_x$ if $u_0 \ne u_1$. 
      Hence, $u_0$ becomes occupied eventually.  
      After one round, the distance between two border robots $\mathcal{D}_2$ become even, $\mathcal{D}_2= \mathcal{D}-2$, and we can use the induction hypothesis.
  \end{enumerate}
 \end{enumerate}
\qed
 \end{proofEnd}

\begin{lemma}\label{theo2:crashed-ok}
Starting from a non-edge-symmetric configuration $C$ with one crashed robot, all robots executing $\mathcal{A}_L$ eventually gather without multiplicity detection in $O(\mathcal{D})$ rounds, where $\mathcal{D}$ denotes the distance between the two borders in $C$. 
\end{lemma}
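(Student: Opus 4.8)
The plan is to prove Lemma~\ref{theo2:crashed-ok} by reducing every configuration with one crashed robot to one of the already-handled base cases, tracking the distance $\mathcal{D}$ between the two borders so that the number of rounds stays $O(\mathcal{D})$. First I would split on whether the crashed robot sits at a border node or in the interior. If the crashed robot is \emph{not} a border, then by the description of $\mathcal{A}_L$ the crashed robot lies either strictly inside the target segment $[u,u']$ or in $O_{[u,u']}$; in the former subcase the border robots keep moving inward as in Lemma~\ref{theo:no-crash-ok}, Case~\ref{case:border-even}/Case~(b), and since the crashed robot is interior it never becomes the obstacle to the borders meeting — the borders collide exactly as in the no-crash analysis, and gathering occurs in $O(\mathcal{D})$. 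So the real work is when the crashed robot is on a border node (or becomes a border node after the interior robots have been absorbed).

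Next I would handle the crashed-robot-at-border case by a further split on $|O_{[u,u']}|$ and on whether the configuration is currently edge-symmetric. If $C$ is edge-symmetric, then by Corollary~\ref{cor:edge-symmetric}'s hypothesis we are not in an edge-symmetric \emph{initial} configuration, but such a configuration can still arise transiently; here I invoke the design guarantee already stated informally in Section~\ref{sec:suig-algorithm}: when an edge-symmetric configuration is reached, a robot has crashed, and the border robots move toward an occupied node, which (because one border is the crashed robot and hence stationary) breaks the symmetry after one round and decreases $\mathcal{D}$ by one, landing us in a non-edge-symmetric configuration with the crashed robot still at a border. From there, when the border robots are at \emph{even} distance, I apply Lemma~\ref{lem:border-even} directly; when they are at odd distance, I use Lemma~\ref{lem:even-exists} to get a uniquely identified target segment $[u,u']$ and dispatch: if $|O_{[u,u']}|>1$, one round moves the outside robots inward and makes the borders even (then Lemma~\ref{lem:border-even}); if $|O_{[u,u']}|=1$ with $v$ the lone outside node, then either $u'$ (the crashed border) equals one endpoint and we are exactly in the setting of Lemma~\ref{lem:Notadjacent} (which in turn calls Lemma~\ref{lem:adjacent} and Lemma~\ref{lem:gathering}), or the crashed robot is $v$ itself and symmetry of the argument with roles of the crashed side reversed still reduces to Lemma~\ref{lem:Notadjacent}/Lemma~\ref{lem:gathering}; and if $|O_{[u,u']}|=0$ the borders are at even distance and we are back to Lemma~\ref{lem:border-even}.

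The round-count bookkeeping is the part to be careful about but not deep: each reduction step either decreases $\mathcal{D}$ by a constant amount per constant number of rounds, or decreases $|[u,u']|$ (the number of occupied nodes inside the target segment) while not increasing $\mathcal{D}$ — and $|[u,u']|\le \mathcal{D}+1$ — so the total is a constant times $\mathcal{D}$. I would state this as: the potential $\Phi = \alpha\,\mathcal{D} + |[u,u']|$ for a suitable constant $\alpha$ strictly decreases every $O(1)$ rounds until a base case (Lemma~\ref{lem:gathering}) applies, and Lemma~\ref{lem:gathering} itself finishes in $\mathcal{D}$ more rounds.

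The main obstacle I anticipate is the transient-edge-symmetry case: one must verify that the only way $\mathcal{A}_L$ produces an edge-symmetric configuration along the way is when the crashed robot is present and, crucially, is positioned so that the ``border robots move toward an occupied node'' rule actually breaks the symmetry rather than preserving it — i.e., the crashed border robot fails to execute its symmetric move, so the configuration desymmetrizes after exactly one round. Establishing that this is the \emph{only} dangerous transient, and that it cannot recur indefinitely (each occurrence strictly shrinks $\mathcal{D}$), is the crux; everything else is routing into Lemmas~\ref{lem:gathering}, \ref{lem:adjacent}, \ref{lem:Notadjacent}, and \ref{lem:border-even}. I would also need to double-check the case where, after interior robots are absorbed, a \emph{new} border coincides with the crashed robot, ensuring the induction on $\mathcal{D}$ (or on $\Phi$) is well-founded there as well.
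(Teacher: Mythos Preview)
Your proposal contains a genuine gap in the interior-crashed-robot case. You claim that when the crashed robot lies strictly inside $[u,u']$, ``the borders collide exactly as in the no-crash analysis, and gathering occurs in $O(\mathcal{D})$.'' This is false. Consider the simplest instance, $|O_{[u,u']}|=0$ with even $\mathcal{D}$: the borders do move inward, but once a border reaches the crashed robot's node, the very next step orders that border to move again; the non-crashed robots there move inward while the crashed robot stays put, so the crashed robot becomes the new border and the border distance drops to an \emph{odd} value. The borders do not simply meet at the centre. Your own parenthetical ``(or becomes a border node after the interior robots have been absorbed)'' contradicts the sentence preceding it, and it is the parenthetical that is correct---but then you owe an argument for that reduction, not a dismissal.

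The situation is worse in the odd-$\mathcal{D}$ case with $|O_{[u,u']}|=1$ and the crashed robot at neither border (neither at $v$ nor at $u'$). Here \emph{all} robots are ordered to move each round, and the crashed robot's failure to move perturbs the target segment in a way that is not covered by the no-crash lemma at all. This is exactly the subcase where the paper does the most work: it tracks the two borders as they approach each other until one becomes adjacent to the crashed robot, names that configuration $C'$, and then performs a further case split (three occupied nodes; more than three with the crashed node also hosting non-crashed robots; more than three with the crashed robot alone) on $C'$ and on the successor $C''$, each branch terminating in an even-border configuration or in the hypotheses of Lemma~\ref{lem:adjacent}/\ref{lem:Notadjacent}. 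Your proposal skips this entirely.

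Two smaller points. First, the case ``crashed robot is $v$'' in your $|O_{[u,u']}|=1$ branch is \emph{not} symmetric to ``crashed robot is $u'$'': the algorithm treats $v$ and $u'$ differently (robots on $v$ move toward $u$, robots on $u'$ move toward $v$), and in fact the $v$-crashed case is trivial---one round makes the borders even---while the $u'$-crashed case is what Lemmas~\ref{lem:adjacent} and~\ref{lem:Notadjacent} are for. Second, the paper's primary decomposition is by the parity of $\mathcal{D}$, not by the position of the crashed robot; the even case reduces quickly to Lemma~\ref{lem:border-even}, and the substantive case analysis lives entirely inside odd~$\mathcal{D}$. Your position-first split could be made to work, but only after you supply the missing interior-to-border reduction and the odd-interior analysis, which together constitute most of the proof.
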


\begin{proof}
First, let us consider the case where $\mathcal{D}$ is even.
\begin{enumerate}
\item If the crashed robot is at an equal distance from both borders, 
by $\mathcal{A}_L$, the border robots move toward each other. 
As they do, the distance between them remains even. 
Hence, the border robots remain the only ones to move. 
Eventually, all robots which are not co-located with the crashed robot become border robots and hence move. Thus, the gathering is achieved in $\frac{\mathcal{D}}{2}$ rounds.
\item \label{case:crashed-border} If the crashed robot is a border, by Lemmas \ref{lem:border-even}, we can deduce that the gathering is achieved in $O(\mathcal{D})$ rounds.
\item Otherwise, as the border robots move toward each other by $\mathcal{A}_L$, the crashed robot eventually becomes at the border. 
We hence retrieve case \ref{case:crashed-border}.  
\end{enumerate}

From the cases above, we can deduce that the gathering is achieved whenever a configuration in which the borders are at an even distance, is reached. 

Let us now focus on the case where $\mathcal{D}$ is odd. By $\mathcal{A}_L$, two occupied nodes $u$ and $u'$ at the largest even distance are uniquely identified to set the target segment $[u,u']$ (recall that $C$ is, in this case, rigid, and each robot has a unique view since the initial configuration cannot be edge-symmetric). 
The robots behave differently depending on the size of $O_{[u,u']}$, the set of occupied nodes outside the segment $[u,u']$. 
By Lemma \ref{lem:same-side}, all nodes in $O_{[u,u']}$ are on the same side. 
Assume w.l.o.g. that for all $u_i \in O_{[u,u']}$, $u_i$ is closer to $u$ than $u'$. 
Two cases are possible: 

\begin{enumerate}
    \item \label{case:theo1} $|O_{[u,u']}|>1$. 
    Let $u_f, u_{f'} \in O_{[u,u']}$ be the two farthest nodes from $u$ such that $dist(u,u_f)>dist(u,u_{f'})$. Note that $u_f$ is a border robot. Let $u_{f+1}$ be $u_f$'s adjacent node toward $u_f'$. 
    If $u_f$ does not host a crashed robot, then as the robots on $u_f$ move toward $u$ and those on $u'$ remain idle by $\mathcal{A}_L$, after one round, the border robots become at an even distance and we are done. 
    By contrast, if $u_f$ hosts a crashed robot, then either $u_f$ hosts other non-crashed robots, and hence after one round, we retrieve a configuration $C'$ in which $[u_{f+1},u']$ is the target segment and $|O_{[u_{f+1},u']}|=1$ or a configuration $C'$ in which $[u_{f'},u']$ is the target segment and $|O_{[u_{f'},u']}|=1$ as robots on $u_{f'}$ also move toward $u$ by $\mathcal{A}_L$. In both cases, we retrieve the following case. 

    \item \label{case:theo2} $|O_{[u,u']}| = 1$. 
    Let $u_f$ be in $O_{[u,u']}$ and assume w.l.o.g. that $dist(u,u_f)<dist(u',u_{f})$ (observe that $u_f$ is a border node).
    If $u_f$ hosts the crashed robot, then, after one round, the border robots are at an even distance as robots on $u'$ move toward $u_f$ by $\mathcal{A}_L$.  
    Similarly, if $u'$ hosts the crashed robot, then by Lemmas \ref{lem:adjacent} and \ref{lem:Notadjacent} after $O(\mathcal{D})$ rounds, a configuration in which the border robots are at an even distance is reached. 
    If neither $u_f$ nor $u'$ hosts the crashed robot, then when the border robots move by $\mathcal{A}_L$ (other robots also move, but we focus for now on the border robots), either the distance between the two borders becomes even after one round (in the case where $u_f$ is adjacent to $u$ as the robots simply exchange their respective positions) or the distance between the border robots remains odd but decreases by two. Observe that in the later case, the border robots keep moving toward each other by $\mathcal{A}_L$ until one of them becomes a neighbor to a crash robot. 
    
    Let $C'$ be the configuration reached once a border robot becomes adjacent to a crashed robot. 
    Let $u_b$ be the border node that is adjacent to crashed robot, and let $u_{\overline{b}}$ be the other border node. 
    We refer to the node that hosts the crashed robot by $u_c$. 
    Since $dist(u_b,u_c)=1$ and $dist(u_b,u_{\overline{b}})$ is odd, $dist(u_{\overline{b}}, u_c)$ is even. Hence, $|O_{[u_{\overline{b}}, u_c]}|=1$. Two cases are possible: 
    
    \begin{itemize}
    \item If $C'$ hosts only three occupied nodes, then, after one round, the distance between the border robots becomes even, and we are done (recall that robots on $u_{\overline{b}}$ and $u_c$ move toward $u_b$ by $\mathcal{A}_L$). 
    \item If there are more than 3 occupied nodes and $u_c$ hosts also non-crashed robots in $C'$, then after one round, the distance between the border robots becomes even as the robots on $u_{\overline{b}}$ move toward $u_{c}$, those on $u_{b}$ move to $u_{c}$ and the non-crashed robots on $u_{c}$ move toward $u_{b}$ by $\mathcal{A}_L$. Hence, we are done.
    \item Otherwise, after one round, the distance between the two borders remains odd as both borders move toward each other by $\mathcal{A}_L$. 
    In the configuration reached $C''$, $u_c$ becomes a new border occupied by a crashed robot and non-crashed robots. 
    If there are only two occupied nodes in $C''$, by $\mathcal{A}_L$, the border robots move toward each other. 
    That is, in the next round, the border robots become at an even distance, and we are done. 
    If there are more than two occupied nodes in $C''$, by Lemmas \ref{lem:even-exists} and \ref{lem:same-side}, a configuration with a new target segment $[o,o']$ which includes one border node is reached. 
    Let us first consider the case in which $u_c \in [o,o']$. 
    If $|O_{[o,o']}|>1$, then after one round, the border robots become at an even distance, and we are done. 
    By contrast, if $|O_{[o,o']}|=1$, then we are done by Lemmas~\ref{lem:adjacent} and \ref{lem:Notadjacent}.
    Next, let us consider the case where $u_c \not\in [o,o']$. 
    Let $o_f$ be the closest occupied node of $u_c$. 
    Without loss of generality, $dist(o,u_c)<dist(o',u_c)$ holds.
    If $|O_{[o,o']}|>1$, then after one round, a configuration in which $[o', o_f]$ is the target segment and $u_c \in O_{[o', o_f]}$ is reached.  
    After one additional round, we are done. 
    Finally, if $|O_{[o,o']}|=1$, then robots on the nodes in $[o,o']$ move toward $u_c$, and we are done.
    
\end{itemize}
\end{enumerate}
From the cases above, we can deduce that the theorem holds.\qed
\end{proof}

From Lemmas \ref{theo:no-crash-ok} and \ref{theo2:crashed-ok}, we can deduce: 

\begin{theorem}
Starting from a non-edge-symmetric configuration $C$, algorithm $\mathcal{A}_L$ solves the SUIG problem on line-shaped networks without multiplicity detection in $O(\mathcal{D})$ rounds, where $\mathcal{D}$ denotes the distance between the two borders in $C$. \end{theorem}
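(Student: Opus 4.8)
The plan is to derive the final theorem as an immediate corollary of the two main lemmas already established, so the proof is essentially a case split on whether a crash occurs. First I would invoke Corollary~\ref{cor:edge-symmetric} and Lemma~\ref{lem:no-SSYNC-line} to justify the two standing assumptions under which $\mathcal{A}_L$ operates: the initial configuration is not edge-symmetric, and the scheduler is FSYNC. Then I would split into two cases. If no robot ever crashes, I would apply Lemma~\ref{theo:no-crash-ok}, which already gives that all robots gather in $O(\mathcal{D})$ rounds where $\mathcal{D}$ is the initial distance between the two borders. If exactly one robot crashes, the subtlety is that the crash may happen at an arbitrary round $t^\star$, not at round $0$.

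The key step is therefore to argue that the configuration $C_{t^\star}$ reached just before (or at) the crash is still amenable to Lemma~\ref{theo2:crashed-ok}. For this I would observe that, up to round $t^\star$, the execution is crash-free, so by the analysis inside the proof of Lemma~\ref{theo:no-crash-ok} the configuration $C_{t^\star}$ is still a legal configuration of $\mathcal{A}_L$ — in particular it is either node-symmetric or rigid (never edge-symmetric while no crash has occurred, by the same argument as in Corollary~\ref{cor:edge-symmetric} applied to the crash-free prefix), and hence $\mathcal{A}_L$'s target segment $[u,u']$ is still well defined on $C_{t^\star}$. Moreover the border distance in $C_{t^\star}$ is at most $\mathcal{D} + O(1)$ larger than in $C_0$ — actually one must be slightly careful here: in the crash-free regime the border distance is non-increasing in cases~\ref{case:border-even} and can transiently involve outside robots, but it is bounded by $\mathcal{D}$ throughout, so $C_{t^\star}$ has border distance $O(\mathcal{D})$. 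Then Lemma~\ref{theo2:crashed-ok}, applied to $C_{t^\star}$ as the new "initial configuration with one crashed robot," yields gathering in $O(\mathcal{D})$ further rounds. Adding the $t^\star = O(\mathcal{D})$ rounds of the crash-free prefix gives the claimed $O(\mathcal{D})$ total bound.

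I expect the main obstacle to be the bookkeeping that Lemma~\ref{theo2:crashed-ok} really does apply to $C_{t^\star}$ rather than only to a genuinely initial configuration: one needs that $C_{t^\star}$ is not edge-symmetric (true, since edge-symmetry cannot arise from a non-edge-symmetric configuration under crash-free FSYNC — this is exactly the invariant implicitly used throughout Section~\ref{sec:suig-proof}), and that the crash itself does not create an edge-symmetric configuration in a way that breaks the hypotheses of Lemma~\ref{theo2:crashed-ok} (it does not, since that lemma's statement only requires a non-edge-symmetric configuration with one crashed robot, and a crash does not change which nodes are occupied, only the mobility of one robot on an already-occupied node). A secondary subtlety is confirming that the round complexity is still measured against the \emph{original} $\mathcal{D}$; since the border distance never exceeds the initial $\mathcal{D}$ during the crash-free phase (movements only squeeze the borders inward or move outside robots inward), every subsequent bound is $O(\mathcal{D})$ and the sum of two $O(\mathcal{D})$ terms is $O(\mathcal{D})$.

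\begin{proof}
By Corollary~\ref{cor:edge-symmetric}, SUIG is unsolvable from edge-symmetric configurations, which is why $\mathcal{A}_L$ assumes $C$ is not edge-symmetric; by Lemma~\ref{lem:no-SSYNC-line}, we restrict to FSYNC. We distinguish two cases according to whether a crash occurs.

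If no robot crashes, Lemma~\ref{theo:no-crash-ok} directly gives that all robots gather in $O(\mathcal{D})$ rounds.

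Suppose now that exactly one robot crashes, at some round $t^\star \geq 0$. During rounds $0,1,\dots,t^\star$ the execution is crash-free, so it coincides with an execution analyzed in the proof of Lemma~\ref{theo:no-crash-ok}; in particular, a straightforward induction (identical to the one used there) shows that no configuration in this prefix is edge-symmetric, since $\mathcal{A}_L$ applied to a non-edge-symmetric configuration never produces an edge-symmetric one, and that the distance between the two borders never exceeds $\mathcal{D}$. Hence the configuration $C_{t^\star}$ at the moment of the crash is non-edge-symmetric, its border distance is at most $\mathcal{D}$, and it now contains exactly one crashed robot. The crash does not alter the set of occupied nodes (the crashed robot remains on an already-occupied node), so $C_{t^\star}$ satisfies the hypotheses of Lemma~\ref{theo2:crashed-ok}. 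Applying that lemma to $C_{t^\star}$, all robots gather in $O(\mathcal{D})$ additional rounds. Moreover, by Lemma~\ref{theo:no-crash-ok} the crash-free prefix has length $t^\star = O(\mathcal{D})$ unless gathering already occurred, in which case the crashed robot, being co-located with all others, trivially satisfies the gathering requirement. Summing the two $O(\mathcal{D})$ contributions, gathering (including the crashed robot) is achieved in $O(\mathcal{D})$ rounds.

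In all cases, $\mathcal{A}_L$ solves the SUIG problem on line-shaped networks without multiplicity detection in $O(\mathcal{D})$ rounds.\qed
\end{proof}
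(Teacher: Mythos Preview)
Your proposal is correct and takes essentially the same approach as the paper, which literally states only ``From Lemmas~\ref{theo:no-crash-ok} and~\ref{theo2:crashed-ok}, we can deduce'' the theorem. You add the bookkeeping the paper leaves implicit: that a crash at an arbitrary round $t^\star$ still lands in a non-edge-symmetric configuration with border distance at most $\mathcal{D}$, so Lemma~\ref{theo2:crashed-ok} applies there; this extra care is sound (in every rule of $\mathcal{A}_L$ the two border nodes either stay put or move inward, so the border distance is non-increasing during the crash-free prefix, and the paper itself notes that edge-symmetry can only arise once a crash has occurred). The one place you slightly overstate things is calling the non-edge-symmetry invariant ``identical to the one used'' in the proof of Lemma~\ref{theo:no-crash-ok}: that proof does not spell out such an induction explicitly, but the claim is consistent with the paper's own remark in the algorithm description and is easily checked case by case.
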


\section{Concluding Remarks}
\label{sec:conclusion}

We initiated the research about stand-up indulgent rendezvous and gathering by oblivious mobile robots in the discrete model, studying the case of line-shaped networks. 
For both rendezvous and gathering cases, we characterized the initial configurations from which the problem is impossible to solve. 
In the case of rendezvous, a very simple algorithm solves all cases left open. 
In the case of gathering, we provide an algorithm that works when the starting configuration is not edge-symmetric. 
Our algorithms operate in the vanilla model without any additional hypotheses, and are asymptotically optimal with respect to the number of rounds to achieve rendezvous or gathering.

A number of open questions are raised by our work:
\begin{enumerate}
\item Is it possible to circumvent impossibility results in SSYNC using extra hypotheses (e.g., multiplicity detection)?
\item Is it possible to solve SUIR and SUIG in other topologies?
\end{enumerate}

\bibliographystyle{splncs04}
\bibliography{ref}

\end{document}